\newtheorem{fact}{Fact}[section]
\newtheorem{observation}{Observation}[section]
\newtheorem{theorem}{Theorem}[section]
\newtheorem{lemma}[theorem]{Lemma}
\newtheorem{corollary}[theorem]{Corollary}
\newtheorem{definition}[theorem]{Definition}
\newtheorem{problem}[theorem]{Problem}
\renewcommand{\bar}[1]{\mkern 1mu\overline{\mkern-1mu#1\mkern-1mu}\mkern 1mu}
\newcommand{\nz}{\mathsf{nz}}
\newcommand{\map}{\mathsf{mp}}
\newcommand{\tildorder}{\widetilde O}
\newcommand{\matchx}[1]{\stackrel{\mathclap{\normalfont\scriptsize\mbox{$#1$}}}{\equiv}}
\newcommand{\matchq}{\matchx{?}}
\newcommand{\hash}{h}
\newcommand{\eunity}[2]{e^{-2\pi i #2 / #1}}
\newcommand{\unity}[2]{W_{#1}^{#2}}
\newcommand{\alf}{{\Sigma}}
\newcommand{\machines}{{\mathcal M}}
\newcommand{\memory}{{\mathcal S}}
\newcommand{\beforall}{\hspace{0.2cm}\operatorname{\forall}}
\newcommand{\str}[1]{{``#1''}}
\newcommand{\chr}[1]{{`#1'}}
\newcommand{\conv}{{\mathcal C}}
\newcommand{\qvec}{{\dagger}}
\newcommand{\rle}{{\circ}}
  \providecommand\BibTeX{{%
    \normalfont B\kern-0.5em{\scshape i\kern-0.25em b}\kern-0.8em\TeX}}}
\begin{document}

\title{String Matching with Wildcards in the Massively Parallel Computation Model}



%
%
%

\author[1]{MohammadTaghi Hajiaghayi }
\author[1]{Hamed Saleh}
\author[2]{Saeed Seddighin}
\author[3]{Xiaorui Sun}
\affil[1]{University of Maryland, College Park}
\affil[2]{Toyota Technological Institute at Chicago}
\affil[3]{University of Illinois at Chicago}
\date{}

\maketitle


\begin{abstract}
We study distributed algorithms for string matching problem in presence of wildcard characters. Given a string $T$ (a text), we look for all occurrences of another string $P$ (a pattern) as a substring of string $T$. Each wildcard character in the pattern matches a specific class of strings based on its type. String matching is one of the most fundamental problems in computer science, especially in the fields of bioinformatics and machine learning. Persistent effort has led to a variety of algorithms for the problem since 1960s. 

With rise of big data and the inevitable demand to solve  problems on huge data sets, there have been many attempts to adapt classic algorithms into the MPC framework to obtain further efficiency. MPC is a recent framework for parallel computation of big data, which is designed to capture the MapReduce-like algorithms. In this paper, we study the string matching problem using a set of tools translated to MPC model. We consider three types of wildcards in string matching:
\begin{itemize}
	\item \textbf{\chr{?} wildcard}: In this setting, the pattern is allowed to contain special \chr{?} characters  or don't cares that match any character of the text. String matching with don't cares could be solved by fast convolutions, and we give a constant round MPC algorithm for which by utilizing FFT in a constant number of MPC rounds.
	\item \textbf{\chr{+} wildcard}: \chr{+} wildcard is a special character that allows for arbitrary repetitions of a character. When the pattern contains \chr{+} wildcard characters, our algorithm runs in a constant number of MPC rounds by a reduction from subset matching problem.
	\item \textbf{\chr{*} wildcard}: \chr{*} is a special character that matches with any substring of the text. When \chr{*} is allowed in the pattern, we solve two special cases of the problem in logarithmic rounds.
\end{itemize}

\end{abstract}





\section{Introduction}
The string matching problem with wildcards, or \textit{pattern matching}, seeks to identify pieces of a text that adhere to a certain structure called the \textit{pattern}. Pattern matching is one of the most applied problems in computer science. Examples range from simple batch applications such as Awk, Sed, and Diff to very sophisticated applications such as anti-virus tools, database queries, web browsers, personal firewalls, search engines, social networks, etc. The astonishing growth of data on the internet as well as personal computers emboldens the need for fast and scalable pattern matching algorithms.

In theory too, pattern matching is a well-studied and central problem. The simplest variant of pattern matching, namely \textit{string matching}, dates back to 1960s. In this problem, two strings $T$ and $P$ are given as input and the goal is to find all substrings of $T$ that are identical to $P$. The celebrated algorithm of Knuth, Morris, and Pratt~\cite{knuth1977fast} (KMP) deterministically solves the problem in linear time. Since then, attention has been given to many variants and generalizations of pattern matching~\cite{hoffmann1982pattern,ramesh1992nonlinear,steyaert1983patterns,dufayard2005tree,knuth1977fast,lewenstein2014space,baron20125pm,brodie2006scalable,hosoya2001regular,mateus2005quantum,montanaro2017quantum,bennett1997strengths,ramesh2003string,ahmed2007distributed,holub2001distributed,clark2004scalable,chen2005boosting,agrawal2008efficient,galil1985optimal,vishkin1985optimal,breslauer1990optimal,breslauer1991lower}. Natural generalizations of string matching are when either the text or the pattern is a tree instead of a string~\cite{hoffmann1982pattern,ramesh1992nonlinear,steyaert1983patterns,dufayard2005tree} or when the pattern has a more sophisticated structure that allows for \chr{?}, \chr{+}, \chr{*}, or in general any regular expression~\cite{lewenstein2014space,baron20125pm,brodie2006scalable,hosoya2001regular}. Also, different computational systems have been considered in the literature: from sequential algorithms~\cite{hoffmann1982pattern,ramesh1992nonlinear,steyaert1983patterns,dufayard2005tree,knuth1977fast,lewenstein2014space,baron20125pm,brodie2006scalable,hosoya2001regular}, to quantum algorithms~\cite{mateus2005quantum,montanaro2017quantum,bennett1997strengths,ramesh2003string}, to distributed settings~\cite{ahmed2007distributed,holub2001distributed,clark2004scalable}, to the streaming setting~\cite{porat2009exact,chen2005boosting,agrawal2008efficient}, to PRAM~\cite{galil1985optimal,vishkin1985optimal,breslauer1990optimal,breslauer1991lower}, etc.

An obvious application of pattern matching is in anti-virus softwares. In this case, a malware is represented with a pattern and a code or data is assumed to be infected if it contains the pattern. In the simplest case, the pattern only consists of ascii characters. However, it happens in practice that malwares allow for slight modifications. That is, parts of the pattern code are subject to change. This can be captured by introducing wildcards to pattern matching. More precisely, each element of the pattern is either an ascii code or a special character \chr{?} which stands for a wildcard. The special character is allowed to match with any character of the text. 

Indeed the desirable property of the \chr{?} case is that the length of the pattern is always fixed. However, one may even go beyond this setting and consider the cases where the pattern may match to pieces of the text with variant lengths. Two classic ways to incorporate this into the model is to consider two special characters \chr{+} and \chr{*}. The former allows for arbitrary repetitions of a single character and the latter allows for arbitrary repetitions of any combination of characters. For example, as an application of pattern matching in bioinformatics, we might be looking for a set of gene patterns in a DNA sequence. Obviously, these pattern are not necessarily located consecutively in the DNA sequence, and one might utilize \chr{*} wildcard to address this problem.

In practice, these problems are formulated around huge data sets. For instance, a human DNA encompasses roughly a Gigabyte of information, and an anti-virus scans Gigabytes (if not Petabyets) of data on a daily basis. Thus, the underlying algorithm has to be scalable, fast, and memory efficient. A natural approach to obtain such algorithms is parallel computation. 
Motivated by such needs, the massively parallel computation (MPC) model~\cite{karloff2010model, andonimodel,Goodrichmodel,beamemodel} 
has been introduced to understand the power and limitations of parallel algorithms. It first proposed by Karloff et al.~\cite{karloff2010model} as a theoretical model to embrace Map Reduce algorithms, a class of powerful parallel algorithms not compatible with previously defined models for  parallel computation. Recent developments in the MPC model have made it a cornerstone for obtaining massively parallel algorithms. 

While in the previous parallel settings such as the PRAM model, usually an $O(\log n)$ factor in the round complexity is inevitable, MPC allows for sublogarithmic round complexity~\cite{karloff2010model,czumaj2017round,lattanzi2011filtering}.  Karloff et al.~\cite{karloff2010model}  also compared this model to PRAM, and showed that for a large portion of PRAM algorithms, there exists an MPC algorithm with the same number of rounds. In this model, each machine has unlimited access to its memory, however, two machines can only interact in between two rounds. Thus, a central parameter in this setting is the round complexity of algorithm since network communication is the typical main bottleneck in practice. The ultimate goal is developing constant-round algorithms, which are highly desirable in practice.

\vspace{0.2cm}
\noindent \textbf{The MPC model:} 
In this paper, we assume that the input size is bounded by $O(n)$, and we have $\machines$ machines of each with a memory of $\memory$. In the MPC model~\cite{karloff2010model, andonimodel,Goodrichmodel,beamemodel}, we assume the number of machines and the local memory size on each machine is asymptotically smaller than
the input size. Therefore, we fix an $0 < x < 1$ and bound the memory of each machine by $\tildorder(n^{1-x})$. Also, our goal is to have near linear total memory and therefore we bound the number of machines by $\tildorder(n^x)$.
An MPC algorithm runs in a number of rounds. In every round, every machine makes some local computation on its data. No communication between machines is allowed during a
round. Between two rounds, machines are allowed to communicate so long as each machine receives no more
communication than its memory. Any data that is outputted from a machine must be computed locally from the data
residing on the machine and initially the input data is distributed across the machines.

In this work we give MPC algorithms for different variants of the pattern matching problem. For the regular string matching and also \chr{?} and \chr{+} wildcard problems, our algorithms are tight in terms of running time, memory per machine, and round complexity. Both \chr{?} and \chr{+} wildcard problems are reduced to fast convolution at the end, and make use of the fact that FFT could computed in $O(1)$ MPC rounds with near-linear total running time and total memory. Also, for the case of \chr{*} wildcard we present nontrivial MPC algorithms for two special cases that mostly tend to happen in practice. However, the round complexity of these two cases is $O(\log(n))$, and the general case problem is not addressed in this paper.


\subsection{Our Results and Techniques}
Throughout this paper, we denote the text by $T$ and the pattern by $P$. Also, we denote the set of characters by $\Sigma$.

We begin, as a warm-up, in Section \ref{nowildcard} by giving a simple MPC algorithm that solves string matching in 2 rounds. The basic idea behind our algorithm is to cleverly construct hash values for the substrings of the text and the pattern. In other words, we construct an MPC data structure that enables us to answer the following query in a single MPC round:

 \textit{Given indices $i$ and $j$ of the text, what is the hash value for the substring of the text starting from position $i$ and ending at position $j$?}
 
  Indeed, after the construction of such a data structure, one can solve the problem in a single round by making a single query for every position of the text. This gives us a linear time MPC algorithm that solves string matching in constant rounds.

\vspace{0.2cm}
{\noindent \textbf{Theorem} \ref{mpcsm} (restated). \textit{There exists an MPC algorithm that solves string matching in constant rounds. The total memory and the total running time of the algorithm are linear.\\}}

For the case of wildcard \chr{?}, the hashing algorithm is no longer useful. It is easy to see that since special \chr{?} characters can be matched with any character of the alphabet, no hashing strategy can identify the matches. However, a more sophisticated coding strategy enables us to find the occurrences of the pattern in the text. Assume for simplicity that $m = |\Sigma|$ is the size of the alphabet and we randomly assign a number $1 \leq \map_c \leq m$ to each character $c$ of the alphabet. Moreover, we assume that all the numbers are unique that is for two characters $c$ and $c'$ we have  $\map_c = \map_{c'}$ if and only if $c = c'$. Now, construct a vector $T^\qvec$ of size $2|T|$ such that  $T^\qvec_{2i-1} = \map_{T_i}$ and $T^\qvec_{2i} = 1/ \map_{T_i}$ for any $1 \leq i \leq |T|$. Also, we construct a vector $P^\qvec$ of size $2|P|$ similarly, expect that $P^\qvec_{2i-1} = P^\qvec_{2i} = 0$ if the $i$'th character of $P$ is \chr{?}. Let $\nz_P$ be the number of the normal characters (\chr{?} excluded) of the pattern. It follows from the construction of $T^\qvec$ and $P^\qvec$ that if $P$ matches with a position $i$ of the text, then we have:
$$T^\qvec[2i-1,2i+2(|P|-1)] . P^\qvec = \nz_P$$
where $T^\qvec[2i-1,2i+2(|P|-1)]$ is a sub-vector of $A$ only containing indices $2i-1$ through $2i+2(|P|-1)$. Moreover, it is showed in \cite{FP74} that the vice versa also holds. That is if $T^\qvec[2i-1,2i+2(|P|-1)] . P^\qvec = \nz_P$ for some $i$ then pattern $P$ matches position $i$. This reduces the problem of pattern matching to the computation of dot products which is known to admit a linear time solution using fast Fourier transform (FFT)~\cite{brigham1988fast}. 

\vspace{0.2cm}
{\noindent \textbf{Theorem} \ref{mpcfft} (restated). \textit{There exists an MPC algorithm that computes FFT in constant rounds. The total memory and the total running time of the algorithm are $\tildorder(n)$.\\}}

Corollary \ref{smrepmpc} gives us an efficient MPC algorithm for the wildcard setting. While the reduction from wildcard matching to FFT is a known technique~\cite{cole2002verifying} the fact that FFT is computable in $O(1)$ MPC rounds leads to efficient MPC algorithm for a plethora of problems.  FFT is used in various combinatorial problems such as knapsack~\cite{bateni2018fast},  3-sum~\cite{chan2015clustered}, subset-sum~\cite{koiliaris2017faster},  tree-sparsity~\cite{backurs2017better}, tree-separability~\cite{bateni2018fast}, necklace-alignment~\cite{bremner2014necklaces}, etc.

The case of \chr{+} and \chr{*} wildcards are technically more involved. In the case of repetition, special characters \chr{+} may appear in the pattern. These special characters allow for arbitrary repetitions of a single character. For instance, a pattern \str{a+bb+} matches all strings of the form $\mathsf{a}^x . \mathsf{b}^y$ such that $x \geq 1$ and $y \geq 2$. Indeed this case is more challenging as the pattern may match with substrings of the text with different lengths.

To tackle this problem, we first compress both the text and the pattern into two strings $T^\rle$ and $P^\rle$ using the run-length encoding method. In the compressed versions of the strings, we essentially avoid repetitions and simply write the numbers of repetitions after each character. For instance, a text ``\textsf{aabcccddad}" is compressed into ``\textsf{a\color{red}[2]\color{black}b\color{red}[1]\color{black}c\color{red}[3]\color{black}d\color{red}[2]\color{black}a\color{red}[1]\color{black}d\color{red}[1]\color{black}}" and a pattern ``\textsf{ab+ccc+}" is compressed into ``\textsf{a\color{red}[1]\color{black}b\color{red}[1+]\color{black}c\color{red}[3+]\color{black}}". With this technique, we are able to break the problem into two parts. The first subproblem only incorporates the characters which is basically the conventional string matching. The second subproblem only incorporates repetitions. More precisely, in the second subproblem, we are given a vector $A$ of $n$ integer numbers and a vector $B$ of $m$ entries in the form of either $i$ or $i+$. An entry of $i$ matches only with indices of $A$ with value $i$ but an entry of $i+$ matches with any index of $A$ with a value at least $i$. Since we already know how to solve string matching efficiently, in order to solve the repetition case, we need to find a solution for the latter subproblem.


To solve this subproblem, we use an algorithm due to Cole and Hariharan~\cite{cole2002verifying}. They showed the subset matching problem could be solved in near-linear time. The definition of the subset matching problem is as follows.

\vspace{0.2cm}
{\noindent \textbf{Problem} \ref{problem:subset} (restated). \textit{Given $T$, a vector of $n$ subsets of the alphabet $\Sigma$, and $P$, a vector of $m$ subsets in the same format as $T$, find all occurrences of $P$ in $T$. $P$ is occurred at position $i$ in $T$ if for every $1 \leq j \leq |P|$, $P_j \subseteq T_{i+j-1}$.\\}}

We can reduce our problem to an instance of the subset matching problem by replacing every $T_i$ with $\{1+, 2+, \ldots, T_i+\} \cup \{T_i\}$, and keep $S$ intact, i.e., replacing each $S_i$ with $\{ S_i \}$. This way, if there is a match, each $S_i$ has to be included in the respective $T_j$. There is an algorithm for this problem with $\tildorder(s)$ running time~\cite{cole2002verifying}, where $s$ shows the total size of all subsets in $T$ and $P$.  The running time is good enough for our algorithm to be near linear since $s$ is at most twice the number of characters in the original text and pattern. Each $T_i$ is the compressed version of $T_i$ consecutive repetitions of a same character in $T$. We also show that the subset matching problem could be implemented in a constant number of MPC rounds, and thus a constant-round MPC algorithm for string matching with \chr{+} wildcard is implied.

{\noindent \textbf{Theorem} \ref{mpcsmrepeat} (restated). \textit{
There exists an MPC algorithm that solves string matching with \chr{+} wildcard in constant rounds. The total memory and the total running time of the algorithm are $\tildorder(n)$.\\}}

Despite positive results for \chr{?} and \chr{+} wildcards, we do not know whether a poly-logarithmic round MPC algorithm exists for \chr{*} wildcard in the most general case or not. This wildcard character matches any substring of arbitrary size in the text. We can consider a pattern consisted of \chr{*} and $\Sigma$ characters as a sequence of subpatterns, maximal substrings not having any \chr{*}, with a \chr{*} between each consecutive pair. In the sequential settings, we can solve the problem by iterating over the subpatterns, and find the next matching position in the text for each one. If we successfully find a match for every subpattern in order, we end up with a substring of $T$ matching $P$. Otherwise, it is easy to observe that $T$ does not match the pattern $P$. To find the next matching position, we can perform a KMP algorithm on $T$ and all the subpatterns one at a time, and make a transition to the next subpattern whenever we find a match, which is still linear in $n + m$, as the total size of the subpatterns is limited by $m$. The algorithm is provided with more details in Observation \ref{ob:star}.

However, things are not as easy in the MPC model, because each subpattern can happen virtually anywhere in the text, and furthermore the number of subpatterns could be as large as $O(m)$. Thus, intuitively, it is impossible to know which subpatterns match each location of $T$ with a linear total memory since the total size of this data could be $\Omega(nm)$, and also we cannot transfer this data in poly-logarithmic number of MPC rounds. Nonetheless, we provide two examples of how we can overcome this restriction by adding a constraint on the input. In both the cases, a near-linear $O(\log(n))$-round MPC algorithm exists for solving \chr{*} wildcard problem.

\begin{enumerate}
\item When the whole pattern fits in a single machine, $m = O(n^{1-x})$, then the number of subpatterns is limited by $O(n^{1-x})$. Thus, each machine could access each subpattern, and finds out if an interval of subpatterns happens in its part of input. At the end, we merge these pieces of information using dynamic programming to obtain the result.

\vspace{0.2cm}
{\noindent \textbf{Theorem} \ref{thm:star_1} (restated). \textit{
Given strings $s \in \alf^n$ and $p \in \{\alf \cup \text{\chr{*}}\}^m$ for $m = O(n^{1 - x})$, 
there is an MPC algorithm to find the solve the string matching problem in $O(\log n)$ rounds using $O(n^x)$ machines.\\}}

\item When no subpattern is a prefix of another, then the number of different subpatterns matching each starting position is limited by $1$. Exploiting this property, we can find out which subpattern matches each starting position, if any. Next, we create a graph with positions in $T$ as vertices, and we put an edge from a position $i$ which matches a subpattern $P_k$, to the minimum position $j$ matching subpattern $P_{k+1}$, which is also located after $i + |P_k| - 1$. This way, \chr{*} wildcard string matching reduces to a graph connectivity problem; finding whether there is path from a position matching the first subpattern to a position matching the last subpattern. Therefore, we will have a $O(\log(n))$-round MPC algorithm for \chr{*} wildcard problem using the standard graph connectivity results in MPC~\cite{karloff2010model}.

\vspace{0.2cm}
{\noindent \textbf{Theorem} \ref{thm:star_2} (restated). \textit{
Given strings $s \in \alf^n$ and $p \in \{\alf \cup \text{\chr{*}}\}^m$ such that  subpatterns are not a prefix of each other, 
there is an MPC algorithm to find the solve the string matching problem in $O(\log n)$ rounds using $O(n^x)$ machines.\\}}

\end{enumerate}

\begin{table*}[t]
\begin{tabular}{ |p{2cm}|p{2cm}|p{2cm}||p{2cm}|p{2cm}|p{2cm}|  }
 \hline
 \multicolumn{2}{|l}{Problem} &
 \multicolumn{1}{l}{Theorem} &
 \multicolumn{1}{l}{Rounds} &
 \multicolumn{1}{l}{Total Runtime} & 
 \multicolumn{1}{l|}{Nodes} \\
 \hline
 \multicolumn{2}{|l|}{$P \in \alf^m$} & \ref{mpcsm} &
  $O(1)$ & $O(n)$ & $O(n^x)$\\ \hline
 \multicolumn{2}{|l|}{$P \in \{\alf \cup \text{\chr{?}} \}^m$} & \ref{smrepmpc} &
  $O(1)$ & $\tildorder(n)$ & $O(n^x)$\\ \hline
 \multicolumn{2}{|l|}{$P \in \{\alf \cup \text{\chr{+}} \}^m$} &  \ref{mpcsmrepeat} &
 $O(1)$ & $\tildorder(n)$ & $O(n^x)$\\ \hline
 \multirow{2}{*}{$P \in \{\alf \cup \text{\chr{*}} \}^m$} & 
 $m = O(n^{1-x})$ &  \ref{thm:star_1} &
 $O(\log(n))$ & $O(n)$ & $O(n^{x})$\\ \cline{2-5}
 & no prefix &  \ref{thm:star_2} & 
 $O(\log(n))$ & $O(n)$ & $O(n^{x})$\\
  \hline
 \end{tabular}
\caption{Overview of results}
 \end{table*}
\section{Preliminaries}

In the pattern matching problem, we have two strings $T$ and $P$ of length $n$ and $m$ over an alphabet $\alf$. In the string matching problem, the first string $T$ is a text, and the second string $P$ is the pattern we are looking for in $T$. For a string $s$, we denote by $s[l,r]$ the substring of $s$ from $l$ to $r$, i.e., $s[l, r] = \langle s_l, s_{l+1}, \ldots, s_{r}\rangle$. Given two strings $T$ and $P$, we are looking for all occurrences of the pattern $P$ as a substring of $T$. In other words, we are looking for all $i \in [1, n - m + 1]$ such that $T[i,i+m-1] = P$. In this case, we say substring $T[i, i+m-1]$ matches pattern $P$.

\begin{problem}
Given two strings $T \in \alf^n$ and $P \in \alf^m$, we want to find all occurrences of the string $P$ (pattern) as a substring of string $T$ (text).
\label{smatch}
\end{problem}

Problem \ref{smatch} has been vastly studied in the literature, and there are a number of solutions that solve the problem in linear time. \cite{knuth1977fast} However, the main focus of this paper is to design massively parallel algorithms for the pattern matching problem. We consider MPC as our parallel computation model, as it is a general framework capturing state-of-the-art parallel computing frameworks such as Hadoop MapReduce and Apache Spark. 

We extend the problem of string matching adding wildcard characters to the pattern. A wildcard character is a special character $\phi \not\in \alf$ in pattern that is not required to match by the same character in $T$. For example, wildcard character \chr{?} can be matched by any arbitrary character in $T$. For instance, let $T$ and $P$ be \str{abracadabra} and \str{a?a} respectively. Then, pattern $P$ occurs at $i = 4$ and $i = 6$ since $P \matchq T[4,6]$ and $P \matchq T[6,8]$. Notation $\matchx{\phi}$ denoted the equality of two strings regarding the wildcard character $\phi$.

We consider three kinds of wildcard characters in this paper:

\begin{enumerate}
\item Character replacement wildcard `?': Any character can match character replacement wildcard.
\item Character repetition wildcard `+': If character repeat wildcard appears immediately after a character $c$ in the pattern, then any number of repetition for character $c$ is accepted.
\item String replacement wildcard `*': A string replacement wildcard can be matched with any string of arbitrary length.
\end{enumerate}

\section{String matching without wildcard}\label{nowildcard}

In this section, we provide an algorithm for the string matching problem with no wildcards; Given strings $T$ and $P$, we wish to find all the starting points in $T$ that match $P$. The round complexity of our algorithm, which is the most important factor in MPC model, is constant. Furthermore, the algorithm is tight in a sense that the total running time and memory is linear. For the more restricted cases, we enhance our algorithm to work in even less rounds.

The algorithm consists of three stages. We discuss the different stages of the algorithm with details in the following. The main goal of the algorithm is to compare the hash of each length $m$ substring of string $T$ with the hash of string $P$, and therefore, find an occurrence if they are equal. We consider the following properties, namely partially decomposability, for our hash function $h$,

\begin{itemize}
\item Merging the hashes of two strings $s$ and $s'$, to find the hash of their concatenation $s + s'$, can be done in $O(1)$.
\item Querying the hash of a substring $s[l,r]$ can be done in $O(1)$, with $O(|s|)$ preprocessing time. 
\end{itemize}

Further details regarding a hash function with partially decomposability property are provided in Appendix~\ref{hashing}.

Consider strings $T$ and $P$ are partitioned into several blocks of size $\memory$, each fit into a single machine. We denote these blocks by $T^1, T^2, \ldots, T^{n/\memory}$ and $P^1, P^2, \ldots, P^{m/\memory}$. It is easy to solve the problem when $P$ is small relative to $\memory$. We can search for $P$ in each block independently by maintaining the partial hash for each prefix of the block. The caveat of this solution is that some substrings lie in two machines. We can fix this issue by feeding the initial string chunks with overlap. If the length of the overlaps is greater than $m$, it is guaranteed than each candidate substring is contained as a whole in one of the initial string chunks. We call this method ``Double Covering''. Utilizing this method, we can propose an algorithm which leads to Observation \ref{mpcdoublecovering}. Therefore, the main catch of the algorithm is to deal with the matching when string $P$ spans multiple blocks.

\begin{observation} Given $T \in \alf^n$ and $P \in \alf^m$ where $m = O(n^{1-x})$, there is an MPC algorithm for string matching problem with no wildcards in $1$ MPC round using $O(n^x)$ machines in linear total running time.
\label{mpcdoublecovering}
\end{observation}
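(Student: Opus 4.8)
The plan is to turn the problem, within the single permitted communication round, into $O(n^x)$ independent and self-contained instances of classical sequential string matching, each small enough to fit on one machine, and then run a linear-time sequential algorithm (KMP~\cite{knuth1977fast}) locally on every machine. No interaction beyond the one reshaping round is needed.

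\textbf{Setup (the double covering).} Fix a chunk length $\ell$ with $m \le \ell = \Theta(n^{1-x})$ and $\ell + m \le \memory$; this is possible since $m = O(n^{1-x})$ and $\memory = \tildorder(n^{1-x})$. Define the overlapping chunks $C_k = T[(k-1)\ell + 1,\ k\ell + m - 1]$ for $k = 1,\dots,\lceil (n-m+1)/\ell\rceil = O(n^x)$ (truncating the last one at position $n$). Each $C_k$ has length $\ell + m - 1 \le \memory$, and since $\ell\ge m$ every position of $T$ lies in at most two chunks. The key combinatorial fact I would isolate first is that a candidate window $T[i,i+m-1]$ is contained entirely in $C_k$ \emph{iff} $(k-1)\ell < i \le k\ell$; hence each of the $n-m+1$ candidate starting positions lies in exactly one chunk, so no occurrence is missed and none is double-counted.

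\textbf{The one round, then local work.} In the allowed round I would (i) route $T$ so that machine $k$ receives $C_k$, and (ii) deliver a copy of $P$ to every machine. For (i) each machine sends and receives $O(\memory)$ symbols because every position of $T$ goes to at most two chunks; for (ii), since $|P| = m = O(\memory)$, adding $P$ to each machine's incoming load keeps it $O(\memory)$, so the round stays within the per-machine communication budget and is realizable with the standard MPC routing/broadcast primitives. After the round, machine $k$ holds the instance $(C_k,P)$ and runs KMP in time $O(|C_k| + m) = O(n^{1-x})$, outputting exactly the occurrences of $P$ starting inside $C_k$. By the double-covering property the union of the $O(n^x)$ local outputs is precisely the set of all occurrences of $P$ in $T$. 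Totals: $O(n^x)$ machines, $O(n^{1-x})$ memory each, $O(n^x)\cdot O(n^{1-x}) = O(n)$ total running time, and $1$ round, which establishes the observation.

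\textbf{Main obstacle.} The only non-routine point is part (ii) of the round: getting $P$ onto all $O(n^x)$ machines from wherever it initially sits. The aggregate volume is $m\cdot O(n^x) = O(n)$ and the per-recipient volume is $O(n^{1-x})$, so reception is never the bottleneck; the only risk is the sending side if $P$ begins concentrated on a few machines. This is dispatched either by observing that an input layout placing a copy of $P$ on every machine is already consistent with $O(n)$ total input (because $m = O(n^{1-x})$), or, for a cold start, by a standard broadcast (a constant number of rounds when $m$ is a polynomial factor below $n^{1-x}$, folded into the same round otherwise). Modulo this routing detail, the proof content is just the double-covering identity together with the correctness and exact linearity of KMP on each chunk.
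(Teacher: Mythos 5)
Your proof is essentially the paper's own argument: both use the ``double covering'' idea of giving each machine an overlapping chunk of $T$ (its block plus the next $m-1$ characters) together with a copy of $P$, so that every candidate window starting in that block lies wholly on one machine, and then solving the resulting self-contained instances locally. The only substantive difference is that the paper performs the local comparison via rolling hashes while you run KMP, which if anything is a small improvement since it avoids the randomization and gives deterministic exact answers; your extra care about the cost of broadcasting $P$ is also slightly more explicit than the paper, which simply asserts the copy can be sent.
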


\begin{proof}
We provide for each machine, which holds $T^i$, the next $m$ characters in the $T$, i.e., $T[i\memory+1:i\memory+m]$, which is viable since $m = O(\memory)$. Therefore, we can compute the hash of each substring starting in $T^i$ as well as $P$ in each machine. See Algorithm \ref{alg:nw1} for more details.
\end{proof}

\begin{algorithm}[h!]
\KwData{
two array $T$ and $P$, where $m = O(\memory)$}
\KwResult{
function $f : \{1,2,\ldots, n-m+1 \} \to \{ 0, 1\}$ such that $f(i) = 1$ iff $T[i:i+m-1] = P$. }

$f(i) \gets 0 \beforall 1 \leq i \leq n-m+1$\;

send a copy of $P$ as well as $T[i\memory+1:i\memory + m]$ to the $i$-th machine, which holds $T^i$, truncate if $i\memory + m$ surpasses $n$.

Run in parallel: \For{$1 \leq i \leq n/\memory$}{
	append $T[i\memory+1:i\memory+m]$ to $T^i$\;
	\For{$1 \leq j \leq \memory$}{
		\If{$\hash(P) = \hash(T^i[j:j+m-1])$}{
			$f((i-1)\memory + j) \gets 1$\;
		}
	}
}
\caption{\textsf{StringMatching}(a)}
\label{alg:nw1}
\end{algorithm}

\begin{theorem} For any $x < 1/2$, given $T \in \alf^n$ and $P \in \alf^m$, there is an MPC algorithm for string matching problem with no wildcards in $O(1)$ MPC rounds using $O(n^x)$ machines in linear total running time.
\label{mpcsm}
\end{theorem}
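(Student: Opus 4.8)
The plan is to reduce the general case (where $P$ may span many machines, since we only assume $x < 1/2$, i.e. $m$ can be as large as $\tildorder(n^{1-x})$ which need not fit in a single machine of size $\tildorder(n^{1-x})$... wait, actually $m \le n$ and a machine holds $\tildorder(n^{1-x})$, so $P$ can span up to $\tildorder(n^x)$ machines) to the "small pattern" case handled by Observation \ref{mpcdoublecovering}, using the partially decomposable hash function $h$ from Appendix \ref{hashing}. The key object to build is the MPC data structure promised in the introduction: after $O(1)$ rounds of preprocessing, we can answer, for any query pair $(i,j)$, the value $h(T[i,j])$ in one more round.

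First I would set up the preprocessing. Partition $T$ into the blocks $T^1,\dots,T^{n/\memory}$ as in the text, one per machine. In round 1, each machine locally computes the hash $h(T^k)$ of its whole block together with the running (prefix) hashes of $T^k$, which takes $O(\memory)$ time by partial decomposability. Then the block hashes $h(T^1),\dots,h(T^{n/\memory})$ — there are $O(n^x)$ of them, which fits in one machine since $x<1/2$ implies $n^x = O(n^{1-x})$ — are gathered onto a single coordinator machine, which computes all the "block-prefix" hashes $H_k := h(T^1 + T^2 + \cdots + T^k)$ by $O(n^x)$ sequential merges, again $O(1)$ per merge. These $H_k$ are broadcast back to all machines. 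Now any machine can compute $h(T[i,j])$ for an arbitrary substring: write the substring as (a suffix of some block $T^a$) $+$ (a contiguous run of whole blocks $T^{a+1}\cdots T^{b-1}$, whose hash is recovered from $H_{b-1}$ and $H_a$ by "subtracting" the prefix — this is where I need the hash to support not just merging but also the inverse of merging, which the partially decomposable construction in the appendix provides) $+$ (a prefix of block $T^b$). The suffix-of-$T^a$ and prefix-of-$T^b$ pieces require the prefix hashes of blocks $a$ and $b$, so each query must be routed to (a machine knowing) those two blocks.

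Second, I would do the actual matching. There are $n-m+1$ candidate positions $i$; position $i$ is an occurrence iff $h(T[i,i+m-1]) = h(P)$ (with the usual caveat that we should verify candidate matches, or use a collision-free/large-field hash so that the randomized guarantee holds with high probability — the paper's hashing appendix handles this). The hash $h(P)$ is computed by the same block-merging procedure applied to $P$ (or, since $P$ has total size $m \le n$, distribute $P$ over machines, hash each block, gather the $O(n^x)$ block hashes, merge). To evaluate all $n-m+1$ candidates in $O(1)$ rounds I distribute the candidate indices evenly across the $O(n^x)$ machines ($O(n^{1-x})$ per machine), and each candidate $i$ needs the prefix-hash data of the (at most two) blocks containing positions $i$ and $i+m-1$; since each block is needed by $O(\memory)$ candidates, the total communication per machine stays $\tildorder(n^{1-x})$ and this is a single routing round. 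Then each machine locally does the $O(1)$-per-candidate hash reconstruction and comparison. Total: $O(1)$ rounds, $\tildorder(n)$ total work and memory.

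The main obstacle is the memory-balancing/routing argument rather than anything algebraic: one must check that gathering the $O(n^x)$ block hashes onto one machine is legal (uses $x<1/2$), that broadcasting the $O(n^x)$ values $H_k$ back out respects per-machine memory, and — the subtle point — that the "sorting/routing" step which sends each candidate position the prefix-hash information of its two boundary blocks can be done so that no machine sends or receives more than $\tildorder(n^{1-x})$ words; this follows because each of the $n/\memory$ blocks is a boundary block for only $O(\memory)$ candidates, but it needs to be spelled out (e.g. via a standard MPC sort primitive). A secondary point to address is correctness of hashing: either argue the hash is collision-free on the relevant set of $O(n^2)$ substring pairs with high probability over the random choice of hash parameters, or add one more $O(1)$-round verification pass that checks each reported occurrence character-by-character.
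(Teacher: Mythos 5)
Your proposal is correct and follows essentially the same route as the paper: partition $T$ into machine-sized blocks, compute local block and prefix hashes in one round, aggregate the $O(n^x)$ block hashes on a coordinator (this is where $x<1/2$ is used) to form cumulative block-prefix hashes, route the relevant prefix/cumulative hashes back so each candidate position can reconstruct $h(T[i,i+m-1])$ via the partially decomposable (merge and inverse-merge) properties of $h$, and compare against $h(P)$ computed the same way. The only cosmetic difference is bookkeeping: the paper ships each individual prefix hash $h(T^i[1,j])$ directly to the machine owning starting position $(i-1)\memory+j-m+1$ in round one, whereas you route at block granularity and argue the same $\tildorder(n^{1-x})$ per-machine communication bound.
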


Intuitively, we can handle larger patterns by transferring the partial hash of each ending point to the machine which the respective starting point lies in. In addition, we  compute the hash of each $k$ first blocks, and by which, we can find the hash of each interval of blocks. Note that we need to assume $x < 1/2$, so that memory of a single machine be capable of storing $O(1)$ information regarding each machine. The algorithm is outlined in Algorithm \ref{alg:nw2}.

\begin{proof}
Assume there are $(n+m)/\memory+2$ machines as following:

\begin{itemize}
\item machines $a_1, a_2, \ldots, a_{n/\memory}$ which $T^i$ is stored in $a_i$.
\item machines $b_1, b_2, \ldots, b_{m/\memory}$ which $P^i$ is stored in $b_i$.
\item machines $c$ and $d$ which are used for aggregation purposes.
\end{itemize}

At the first round, we compute the hash of each block $T^i$ and $P^i$ and send them to machines $c$ and $d$ respectively. Furthermore, we compute the partial hashes of each prefix of each block $T^i$, that is $\hash(T^i[1,j])$ for $1 \leq j \leq \memory$. The calculated hash of each prefix should be sent to the machine containing the respective starting point. Therefore, $T^i[1,j]$ should be sent to the node containing starting point $(i-1)\memory+j-m+1$. The total communication overhead of this round is $O(n+m)$.

In the next round, we aggregate the calculated hashes in the previous round. In node $d$, we calculate the hash of string $P$ by merging the hashes of $T^1, T^2, \ldots, T^{m/\memory}$. The value of $\hash(P)$ should be sent to all $a_i$ for $1 \leq i \leq n/\memory$. Those nodes are supposed to compare $\hash(P)$ with the hash of each starting point lying in their block to find the matches in the last round. Furthermore, in node $c$, the hash of each first $k$ blocks of $T$ should be calculated, i.e., $\hash(T^1+T^2+\ldots+T^i)$ for all $1 \leq i \leq n/\memory$. Each machine $a_i$ needs to compute the hash of a sequence of consecutive blocks of $T$ that lie between each starting point in $T^i$ and its respective end point. Therefore, each machine should receive up to $O(1)$ of these hashes, because the set of respective ending points spans at most $2$ blocks. 

In the final round, we have all the data required for finding the matches with starting point inside block $T^i$ at node $a_i$. It suffices to find the hash of each suffix of $T^i$, i.e., $\hash(T^i[l, \memory])$, in which $l$ is the candidate starting point. Consider $T^k$ is the block where the ending point respective to $l$ lies inside, and $r$ is the index of the ending point inside $T^k$. Then, using the hash of the sequence of block between $T^i$ and $T^k$, and $\hash(T^k[1,r])$ (both the hashes has been sent to $a_i$ in the previous steps) one can decide whether the starting point $l$ in block $T^i$ is a match or not.
\end{proof}

\begin{algorithm}[h!]
\KwData{
two array $T$ and $P$}
\KwResult{
function $f : \{1,2,\ldots, n-m+1 \} \to \{ 0, 1\}$ such that $f(i) = 1$ iff $T[i:i+m-1] = P$. }

$f(i) \gets 0 \beforall 1 \leq i \leq n-m+1$\;

Run in parallel: \For{$1 \leq i \leq m/\memory$}{
	send $\hash(P^i)$ to machine $d$\;
}

Run in parallel: \For{$1 \leq i \leq n/\memory$}{
	send $\hash(T^i)$ to machine $c$\;
	\For{$1 \leq j \leq \memory$}{
		send $\hash(T^i[1,j])$ to the machine $a_k$, where the corresponding starting point, i.e., $(i-1)\memory+j-m+1$, lies in $T^k$\;
	}
}

send $\hash(P) \gets \textsf{merge}(\hash(P^1), \hash(P^2), \ldots, \hash(P^{m/\memory}))$ to each $a_i$ for $1 \leq i \leq n/\memory$\;

\For{$1 \leq i \leq n/\memory$}{
	$\hash(T^1+T^2+\ldots+T^i) \gets \textsf{merge}(\hash(T^1+T^2+\ldots+T^{i-1}), \hash(T^i))$\;
	send $\hash(T^1+T^2+\ldots+T^i)$ to each machine $a_j$ that needs it in the next round\;
}

Run in parallel: \For{$1 \leq i \leq n/\memory$}{
	\For{$1 \leq l \leq \memory$}{
		$s \gets (i-1)\memory+l$\;
		let $k$ be the index of the machine where $s+m-1$ lies in $T^k$\;
		$r \gets s+m-1-(k-1)\memory$\;
		$\eta = \hash(T^{i+1}+T^{i+2}+\ldots+T^{k-1})$\;
		$\hash(T[s, s+m-1]) \gets \textsf{merge}(\hash(T^i[l, \memory]), \eta, \hash(T^k[1, r]))$\;
		\If{$\hash(T[s, s+m-1]) = \hash(P)$}{
			$f(s) \gets 1$\; 
		}
	}
}

\caption{\textsf{StringMatching}(b)}
\label{alg:nw2}
\end{algorithm}

\section{Character Replace Wildcard \chr{?}}\label{charreplace}

We start this section summarizing the algorithm for string matching with \chr{?} in sequential settings, which require Fast Fourier Transform. 

\begin{theorem}[Proven in \cite{FP74}]
Given strings $T \in \alf^n$ and $P \in (\alf \cup \{?\})^m$, it is possible to find all the substrings of $T$ that match with pattern $P$ in $\widetilde{O}(n+m)$.
\label{smrep}
\end{theorem}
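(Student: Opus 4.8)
The plan is to carry out the algebraic reduction sketched in the introduction and then reduce the whole computation to a single FFT-based convolution. First, since at most $n+m$ distinct characters occur in $T$ and $P$, I would relabel the alphabet so that $\map$ is an injection into $\{1,2,\dots,n+m\}$; this costs $\tildorder(n+m)$. Next, build $T^\qvec\in\mathbb{R}^{2n}$ and $P^\qvec\in\mathbb{R}^{2m}$ as in the introduction, taking care to cross the two halves of $P^\qvec$ so that at a matching coordinate the two products cancel to a constant: $T^\qvec_{2k-1}=\map_{T_k}$, $T^\qvec_{2k}=1/\map_{T_k}$, and for $P$, $P^\qvec_{2j-1}=1/\map_{P_j}$, $P^\qvec_{2j}=\map_{P_j}$ when $P_j\neq{?}$, and $P^\qvec_{2j-1}=P^\qvec_{2j}=0$ when $P_j={?}$. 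For a shift $i\in[1,n-m+1]$ let $C_i$ be the dot product of the length-$2m$ window $T^\qvec[2i-1,\,2i+2(m-1)]$ with $P^\qvec$. Reversing $P^\qvec$ turns the family $(C_i)_i$ into a single linear convolution of a length-$2n$ sequence with a length-$2m$ one, which one FFT-based multiplication produces.

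The second step is to prove that $C_i$ detects a match. The contribution of coordinate $j\in[1,m]$ to $C_i$ is $0$ if $P_j={?}$, and otherwise equals $\map_{T_{i+j-1}}/\map_{P_j}+\map_{P_j}/\map_{T_{i+j-1}}$. Because $T$ has no wildcards, every $\map_{T_k}$ is a positive integer, so this term is a sum $x/y+y/x$ of reciprocal positive reals, hence $\ge 2$, with equality exactly when $x=y$; since $\map$ is injective, equality holds exactly when $T_{i+j-1}=P_j$. Summing over the $\nz_P$ non-wildcard coordinates, $C_i\ge 2\,\nz_P$ for every $i$, and $C_i=2\,\nz_P$ iff every non-wildcard coordinate matches, i.e.\ iff $P\matchq T[i,i+m-1]$. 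This is exactly the argument of \cite{FP74}: the ``if'' direction is the per-coordinate identity at a match, and the ``only if'' direction is the equality case of $x/y+y/x\ge 2$. (If one prefers the normalization of the introduction, rescale $C_i$ by $1/2$ and compare against $\nz_P$.) The occurrences of $P$ are then read off as the shifts attaining the threshold, in $O(n)$ extra time.

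The last step is the running time, and the only genuine subtlety is here: a size-$O(n)$ FFT convolution uses $O(n\log n)$ \emph{arithmetic} operations, but the entries $1/\map_c$ of $P^\qvec$ are rationals, so exact arithmetic over $\mathbb{R}$ is not literally available, and one must ensure the exact comparison ``$C_i=2\,\nz_P$'' survives finite-precision computation. I would discharge this by replacing $C_i$ with the equivalent integer quantity $D_i:=\sum_{j:\,P_j\neq{?}}(\map_{T_{i+j-1}}-\map_{P_j})^2$. Expanding the square, $D_i$ equals $\bigl(\sum_j q_j\,\map_{T_{i+j-1}}^2\bigr)-2\bigl(\sum_j v_j\,\map_{T_{i+j-1}}\bigr)+\text{const}$, where $q_j=[P_j\neq{?}]$ and $v_j=\map_{P_j}\,q_j$; the two $i$-dependent sums are cross-correlations of integer vectors with entries at most $(n+m)^2$, and the last term does not depend on $i$. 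Each $D_i$ is thus a nonnegative integer bounded by $m(n+m)^2=\mathrm{poly}(n)$, so the two correlations can be computed exactly in $O(n\log n)$ word operations — e.g.\ by a number-theoretic transform modulo a single prime larger than $2m(n+m)^2$, or by a complex FFT carrying $O(\log n)$ bits of mantissa and rounding. Since $\map$ is injective, $D_i=0$ holds iff $T_{i+j-1}=P_j$ for all non-wildcard $j$, the same predicate certified by $C_i=2\,\nz_P$. Altogether the algorithm runs in $\tildorder(n+m)$ time (the $+m$ absorbing the trivial cases $m>n$), as claimed. The combinatorial reduction is routine; the part that requires care is precisely the passage from the ``dot-product-equals-$\nz_P$'' test to an exactly computable convolution, which the squared-difference reformulation handles cleanly.
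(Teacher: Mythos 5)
Your proposal is correct, and its first half is the same reduction the paper sketches: encode each character by the reciprocal pair $(\map_c,1/\map_c)$, encode \chr{?} by zeros, compute one linear convolution with the reversed pattern vector, and use the equality case of $x/y+y/x\ge 2$ to show the window sum hits $2\nz_P$ exactly at the matches. Two things you do differently are worth noting. First, you cross the two halves of $P^\qvec$ (putting $1/\map_{P_j}$ at the odd slot and $\map_{P_j}$ at the even slot); this is necessary for the per-coordinate product to equal the constant $2$ at a match, whereas the paper's explicit vectors in Corollary \ref{smrepmpc} use the same order for $T^\qvec$ and $P^\qvec$ and would give $\map_c^2+\map_c^{-2}$, which is not $2$ unless $\map_c=1$ (the paper's worked example only exercises $\map_{\mathsf{a}}=1$, so the slip is invisible there). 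Second, and more substantively, the paper merely flags the numerical-precision issue (``adds a $\log(|\alf|)$ factor'') and outsources its removal to subsequent work, while you actually discharge it by replacing the reciprocal sum with the integer quantity $D_i=\sum_{j:P_j\neq ?}(\map_{T_{i+j-1}}-\map_{P_j})^2$, whose expansion needs only two cross-correlations of integer vectors with polynomially bounded entries and hence admits exact evaluation by an NTT or a complex FFT with $O(\log n)$ bits of mantissa. This squared-difference reformulation is essentially the Clifford--Clifford technique rather than Fischer--Paterson's, and it buys a self-contained, exact $\tildorder(n+m)$ bound where the paper's route relies on a citation; the paper's route, in exchange, is the one that generalizes verbatim to the MPC reduction in Corollary \ref{smrepmpc}, but your $D_i$ is equally two convolutions and would slot in there just as well.
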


The idea behind Theorem \ref{smrep} is taking advantage of fast multiplication algorithms, e.g. using Fast Fourier Transform, to find the occurrences of $P$ which contains `?' wildcard characters. Fischer and Paterson \cite{FP74} proposed the first algorithm for string matching with `?' wildcard. The main idea is to replace each character $c \in \alf$ in string $T$ or $P$ with two consecutive numbers $\map_c$ and $1/\map_c$ and each `?' with two consecutive zeros. If we compute the convolution of $T$ and the reverse of $P$, we can ensure a match if the convoluted value with respect to a substring of $T$ equals the number non-wildcard characters in $P$, aka $\nz_P$. This procedure requires a non-negligible precision in float arithmetic operations that adds a $\log(|\alf|)$ factor to the order of the algorithm.  In the subsequent works, the dependencies on the size of alphabet has been eliminated.~\cite{kalai2002efficient,indyk1998faster,CC07}

As stated in Theorem \ref{smrep}, we can solve string matching with \chr{?} wildcards in $O(n\log(n))$ using convolution. Convolution can be computed in $O(n\log(n))$ by applying Fast Fourier Transform on both the arrays, performing a point-wise product, and then applying inverse FFT on the result. Therefore, we should implement FFT in a constant round MPC algorithm in order to solve pattern matching with wildcard \chr{?}. Inverse FFT is also possible with the similar approach.
 
Given an array $A = \langle a_0, a_1, \ldots, a_{n-1} \rangle$, we want to find the Discrete Fourier Transform of array $A$. Without loss of generality, we can assume $n = 2^k$, for some $k$, to avoid the unnecessary complication of prime factor FFT algorithms; it is possible to right-pad by zeroes otherwise. By Fast Fourier Transform, applying radix-2 Cooley-Tukey algorithm for example, one can compute the Discrete Fourier Transform of $A$ in $O(n\log(n))$ time, which is defined as:

\begin{align}
a^*_k = \sum_{j=0}^{n-1}{a_j\cdot \eunity{n}{j k}} \label{dftO}
\end{align}

Where $A^* = \langle a^*_0, a^*_1, \ldots, a^*_{n-1} \rangle$ is the DFT of $A$. We interchangeably use the alternative notation $\unity{n}{} = \eunity{n}{}$, by which Equality \ref{dftO} becomes

\begin{align}
a^*_k = \sum_{j=0}^{n-1}{a_j\cdot \unity{n}{jk}} \label{dft}
\end{align}

We state the following theorem regarding the FFT in the MPC model.

\begin{theorem}
For any $x \leq 1/2$, a collection of $O(n^x)$ machines each with a memory of size $O(n^{1-x})$ can solve the problem of finding the Discrete Fourier Transform of $A$ with $O(1)$ number of rounds in MPC model. The total running time equals $O(n\log(n))$, which is tight.
\label{mpcfft}
\end{theorem}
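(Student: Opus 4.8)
The plan is to implement the two-level Cooley--Tukey (``four-step'') factorization of the DFT and to observe that when $x \le 1/2$ each level decomposes into many independent small DFTs, each of which fits entirely on one machine. Assume $n = 2^k$ (otherwise right-pad $A$ with zeros, which at most doubles $n$) and write $n = n_1 n_2$ with $n_1 = 2^{\lceil k/2 \rceil}$ and $n_2 = 2^{\lfloor k/2 \rfloor}$, so $n_1, n_2 = \Theta(\sqrt n)$. Index the input by $j = j_1 n_2 + j_2$ with $0 \le j_1 < n_1$, $0 \le j_2 < n_2$, and the output by $k = k_1 + k_2 n_1$ with $0 \le k_1 < n_1$, $0 \le k_2 < n_2$. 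Expanding $jk = j_1 n_2 k_1 + j_1 k_2 n + j_2 k_1 + j_2 k_2 n_1$ and using $\unity{n}{n} = 1$, $\unity{n}{n_2} = \unity{n_1}{}$, $\unity{n}{n_1} = \unity{n_2}{}$, Equation~\eqref{dft} becomes
\begin{align*}
a^*_{k_1 + k_2 n_1} \;=\; \sum_{j_2=0}^{n_2-1} \unity{n_2}{j_2 k_2} \,\unity{n}{j_2 k_1} \Bigl( \sum_{j_1=0}^{n_1-1} a_{j_1 n_2 + j_2}\, \unity{n_1}{j_1 k_1} \Bigr).
\end{align*}

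Next I would turn this identity into a three-phase algorithm. In Phase~1, for each of the $n_2$ fixed values of $j_2$ we compute the inner length-$n_1$ DFT $b_{j_2, k_1} = \sum_{j_1} a_{j_1 n_2 + j_2}\, \unity{n_1}{j_1 k_1}$; these $n_2 = \Theta(\sqrt n)$ transforms are independent of one another. In Phase~2 we multiply by the twiddle factors, $c_{j_2, k_1} = \unity{n}{j_2 k_1}\, b_{j_2, k_1}$, a purely local step once each family $\{c_{j_2, \cdot}\}$ resides on a single machine. In Phase~3, for each of the $n_1$ fixed values of $k_1$ we compute the outer length-$n_2$ DFT $a^*_{k_1 + k_2 n_1} = \sum_{j_2} c_{j_2, k_1}\, \unity{n_2}{j_2 k_2}$; again these $\Theta(\sqrt n)$ transforms are independent.

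Then I would check the MPC bookkeeping. Spread the $\Theta(\sqrt n)$ inner DFTs over the $O(n^x)$ machines, so each machine owns $\Theta(\sqrt n / n^x) = \Theta(n^{1/2 - x})$ of them; since $x \le 1/2$ this count is at least one, and such a machine stores $\Theta(n^{1/2-x} \cdot \sqrt n) = \Theta(n^{1-x})$ input entries, within its memory $O(n^{1-x})$. Crucially, one length-$n_1 = \Theta(\sqrt n)$ DFT has input and output size $\Theta(\sqrt n) = O(n^{1-x})$, hence fits on a single machine and can be computed there by an ordinary sequential radix-2 FFT in time $O(\sqrt n \log n)$; a machine handling $n^{1/2-x}$ of them spends $O(n^{1-x}\log n)$, for $O(n \log n)$ total. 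The twiddle step is $O(1)$ per entry, $O(n)$ overall, and all powers of $\unity{n}{}$, $\unity{n_1}{}$, $\unity{n_2}{}$ that a machine needs are generated locally in $O(1)$ amortized time. Between phases the data is moved by a fixed transpose-type permutation — from ``grouped by $j_2$'' to ``grouped by $k_1$'' — in which every machine sends and receives $O(n^{1-x})$ words, so this is a single communication round respecting the memory bound; the initial load of $A$ into the Phase-1 layout and the final emission of $A^*$ are likewise one round each. Phase~3 is symmetric to Phase~1. Altogether the algorithm uses a constant number of rounds, $O(n^x)$ machines, $O(n)$ total memory, and $O(n\log n)$ total time, which matches the sequential cost and is therefore tight.

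The main obstacle — and precisely the reason for the hypothesis $x \le 1/2$ — is the requirement that each atomic subproblem fit on one machine: with $n = n_1 n_2$ and $n_1 \approx n_2 \approx \sqrt n$, a single small DFT needs $\Theta(\sqrt n)$ words, which is $\le n^{1-x}$ exactly when $x \le 1/2$. (For larger $x$ one would need $\Theta(\log_{n^{1-x}} n)$ nested levels of this factorization and hence a larger, non-constant number of rounds, which is why the statement is restricted.) The remaining points are routine: handling the case where $n$ is not a perfect-square power of two (zero-padding together with the slightly unbalanced split $2^{\lceil k/2\rceil}\cdot 2^{\lfloor k/2 \rfloor}$), and confirming that the transpose routing between phases can be realized in $O(1)$ MPC rounds under the memory bound $O(n^{1-x}) \ge \sqrt n$.
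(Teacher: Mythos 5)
Your proof is correct and rests on the same core idea as the paper's: a two-level Cooley--Tukey (``four-step'') factorization of the DFT into a batch of small inner DFTs computed locally, a twiddle multiplication, a transpose-type redistribution, and a batch of small outer DFTs. The only substantive difference is the choice of factor sizes. The paper splits $n = \memory \cdot \machines = n^{1-x}\cdot n^{x}$, so that each inner DFT has length exactly $\memory$ and each machine performs exactly one of them, after a bit-reversal permutation that delivers the mod-$\machines$ residue classes to machines; Lemma~\ref{butterfly} is then precisely the $\memory\times\machines$ instance of the index-splitting algebra you carried out explicitly. You instead use the balanced split $n = n_1 n_2$ with $n_1, n_2 = \Theta(\sqrt n)$, so a single inner DFT has size $\Theta(\sqrt n)$ and each machine runs $\Theta(n^{1/2-x})$ of them. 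Both choices lead to the same constraint $x \le 1/2$ (yours because a $\Theta(\sqrt n)$-element subproblem must fit in $O(n^{1-x})$ memory, the paper's because the outer length-$\machines$ DFTs must fit in the same memory after redistribution), a constant number of rounds, and $O(n\log n)$ total work. The paper's asymmetric split is slightly more natural in the MPC setting because the inner DFT exactly aligns with a machine's memory, removing one bookkeeping layer; your balanced split is the textbook-symmetric form and makes the ``fits on one machine'' condition transparent. Either route is valid.
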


Roughly speaking, our algorithm is an adaptation of Cooley-Tukey algorithm in the MPC model. In Appendix~\ref{subsection:mpcfft}, we justify Theorem \ref{mpcfft} by giving an overview of the algorithm, and then we show how it results in a $O(1)$-round algorithm for computing the DFT of an array in the MPC model.

\begin{corollary}
Given strings $T \in \alf^n$ and $P \in (\alf \cup \{?\})^m$, it is possible to find all the substrings of $T$ that match with pattern $P$ with a $O(1)$-round MPC algorithm with total runtime and memory of $\widetilde{O}(n+m)$.
\label{smrepmpc}
\end{corollary}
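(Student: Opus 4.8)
The plan is to combine the algebraic reduction of Fischer--Paterson (Theorem \ref{smrep}) with the constant-round MPC implementation of FFT (Theorem \ref{mpcfft}), and to verify that every auxiliary step of the reduction is itself computable in $O(1)$ MPC rounds with near-linear total work and memory. Concretely, recall that the reduction encodes $T$ into a real vector $T^\qvec$ of length $2n$ via $T^\qvec_{2i-1} = \map_{T_i}$, $T^\qvec_{2i} = 1/\map_{T_i}$, and $P$ into a real vector $P^\qvec$ of length $2m$ with the two coordinates zeroed out wherever $P$ has a \chr{?}. Matching positions of $P$ in $T$ are then exactly the indices $i$ where the sliding dot product $T^\qvec[2i-1,\,2i+2(|P|-1)]\cdot P^\qvec$ equals $\nz_P$, and this family of sliding dot products is a single convolution of $T^\qvec$ with the reversal of $P^\qvec$.

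First I would describe the preprocessing round: distribute $T$ and $P$ across the $O(n^x)$ machines in contiguous blocks of size $O(n^{1-x})$, draw the random injection $\map$ from $\Sigma$ to $\{1,\dots,|\Sigma|\}$ (broadcasting it, or a hash of it, in $O(1)$ rounds since $|\Sigma| = O(n)$), and have each machine locally build its slice of $T^\qvec$ and of $P^\qvec$ and compute its local count of non-\chr{?} characters; a single aggregation round sums these to obtain $\nz_P$. Next, zero-pad both vectors up to the next power of two $N = \Theta(n+m)$ — a purely local operation — and invoke Theorem \ref{mpcfft} to compute, in $O(1)$ rounds and $\widetilde O(N)$ total time and memory, the DFTs $\widehat{T^\qvec}$ and $\widehat{\text{rev}(P^\qvec)}$. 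Then perform the pointwise product $\widehat{T^\qvec}_k \cdot \widehat{\text{rev}(P^\qvec)}_k$, which is embarrassingly parallel: each machine multiplies the coordinates it holds, costing one round and no extra asymptotic memory. Apply the inverse transform — which is again an FFT up to conjugation and a global scaling by $1/N$, hence also $O(1)$ rounds by Theorem \ref{mpcfft} — to recover the convolution vector $C$. Finally, in one more round each machine scans the coordinates of $C$ it owns and reports every index $i$ (in the valid range $1 \le i \le n-m+1$) whose corresponding entry rounds to $\nz_P$; correctness of this test is exactly the two-directional guarantee of \cite{FP74} cited after Theorem \ref{smrep}. Summing the constant number of rounds gives an $O(1)$-round algorithm, and every stage used $\widetilde O(n+m)$ total time and memory, so the overall bounds match the claim.

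The main obstacle I anticipate is not the round count, which is immediate once Theorem \ref{mpcfft} is in hand, but the \emph{numerical-precision bookkeeping}: the Fischer--Paterson encoding multiplies values of magnitude up to $|\Sigma|$ against their reciprocals and sums $m$ such terms, so the convolution entries must be computed to enough bits that the "equals $\nz_P$" test is unambiguous after rounding. In the sequential setting this contributes the known $\log|\Sigma|$ factor; in MPC I would need to confirm that carrying $O(\log(n+m) + \log|\Sigma|)$ bits per number keeps each machine's memory within $\widetilde O(n^{1-x})$ and keeps the per-machine FFT arithmetic within the $\widetilde O(N)$ total-time budget — which it does, since it is only a polylogarithmic blowup absorbed by the $\widetilde O(\cdot)$ notation. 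A secondary (and genuinely minor) point is handling the $2\times$ expansion from the encoding and the rounding up to a power of two simultaneously, but both only change $N$ by a constant factor. If one instead wants to remove the alphabet dependence entirely, one can substitute the later encodings of \cite{kalai2002efficient,indyk1998faster,CC07} in place of Fischer--Paterson; the MPC wrapper is unchanged, since those reductions are likewise "encode locally, convolve, threshold locally."
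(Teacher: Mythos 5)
Your proposal follows essentially the same route as the paper's own proof: encode $T$ and $P$ into the Fischer--Paterson pair vectors $T^\qvec,\,P^\qvec$, realize the sliding dot products as a single convolution $T^\qvec \circledast \textsf{rev}(P^\qvec)$, compute that convolution with two applications of the $O(1)$-round MPC FFT of Theorem~\ref{mpcfft} plus a local pointwise product, and threshold the output locally. What you add beyond the paper is the explicit MPC bookkeeping (how the encoding, padding, $\nz_P$-aggregation, and final scan are each one local round) and the numerical-precision accounting; the paper's proof of the corollary is terser and leaves those as understood. One small detail worth reconciling: you state the matching condition as the dot product equalling $\nz_P$, but with the paired encoding each matched non-wildcard position contributes $2$ (since $\map_c\cdot \map_c^{-1} + \map_c^{-1}\cdot\map_c = 2$), so the threshold in the paper's proof is $\conv_{2i+2m-1} = 2\nz_P$; this is a constant factor that does not affect the algorithm or the complexity bounds, but the stated test should match the encoding you actually use.
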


\begin{proof}
Build vectors $T^\qvec$ and $P^\qvec$ as following:

\begin{align*}
T^\qvec = \langle \map_{T_1}, \map_{T_1}^{-1}, \map_{T_2}, \map_{T_2}^{-1}, \ldots, \map_{T_n}, \map_{T_n}^{-1} \rangle\\
P^\qvec = \langle \map_{P_1}, \map_{P_1}^{-1}, \map_{P_2}, \map_{P_2}^{-1}, \ldots, \map_{P_n}, \map_{P_n}^{-1} \rangle
\end{align*}

Note that $\map_c$ for all characters $c \in \alf$ has a positive integer value, and $\map_c^{-1}$ equals $1/\map_c$. However, let $\map_? = \map_?^{-1} = 0$ for wildcard character \chr{?}. For example, for text \str{abracadabra} and pattern \str{a?a}, considering $mp_c$ equals the index of each letter in the English alphabet, we will have

\begin{align*}
& T^\qvec = \langle 
1, \frac{1}{1},
2, \frac{1}{2},
18, \frac{1}{18},
1, \frac{1}{1},
3, \frac{1}{3},
1, \frac{1}{1},
4, \frac{1}{4},
1, \frac{1}{1},
2, \frac{1}{2},
18, \frac{1}{18},
1, \frac{1}{1} \rangle\\
& P^\qvec = \langle
1, \frac{1}{1},
0, 0,
1, \frac{1}{1} \rangle
\end{align*}

Let $\conv = T^\qvec \circledast \textsf{rev}(P^\qvec)$, where operator $\circledast$ shows the convolution of its two operands. We can observe that for all $1 \leq i \leq n-m+1$ we have 

\begin{align}
\conv_{2i+2m-1} = \sum_{j=1}^{2m}{T^\qvec_{2(i-1)+j} \cdot P^\qvec_{j}}
\end{align}

It is clear that if $T[i, i+m-1]$ matches pattern $P$, then $\conv_{2i+2m-1} = 2\nz_P$ since each the wildcard characters add up to $0$, and for any other character $c$, $\map_c \times 1/\map_c + \map_c \times 1/\map_c = 2$. According to \cite{FP74}, the other side of this expression also holds, i.e., $T[i, i+m-1]$ matches pattern $P$ if $\conv_{2i+2m-1} = 2\nz_P$. Therefore, to find that whether a substring $T[i, i+m-1]$ matches pattern $P$ or not, it suffices to

\begin{enumerate}
\item Compute $\conv = T^\qvec \circledast \textsf{rev}(P^\qvec)$ utilizing FFT, which can be implemented in $O(1)$ rounds in MPC model as stated in Theorem \ref{mpcfft}.
\item For all $1 \leq i \leq n-m+1$, substring $T[i, i+m-1]$ matches pattern $P$ iff $\conv_{2i+2m-1} = 2\nz_P$.
\end{enumerate}

\end{proof}

\section{Character Repetition Wildcard \chr{+}}\label{charrepeat}

The character repetition wildcard, shown by \chr{+}, allows expressing ``an arbitrary number of a single character" within the pattern. An occurrence of \chr{+} which is immediately after a character $c$, matches arbitrary repetition of character $c$. For example, text \str{bookkeeper} has a substring that matches pattern \str{oo+k+ee+}, but none of its substrings match pattern \str{oo+kee+}. In Subsection \ref{ss:gtm}, we reduce pattern matching with wildcard \chr{+} to greater-than matching in $O(1)$ MPC rounds. The greater-than matching problem is defined with further details in Problem \ref{problem:gtm}. Afterwards, we show a reduction of greater-than matching from subset matching problem, explained in \ref{problem:subset}. Then, by showing subset matching could be implemented in $O(1)$ MPC rounds, we propose an $O(1)$-round MPC algorithm for string matching with wildcard \chr{+} in Theorem \ref{mpcsmrepeat}.

\subsection{Relation to greater-than matching}\label{ss:gtm}

\begin{problem}
\emph{Greater-than matching:} Given two arrays $T$ and $P$, of length $n$ and $m$ respectively, find all indices $1 \leq i \leq n-m+1$ such that the continuous subarray of $T$ starting from $i$ and of length $m$ is element-wise greater than $P$, i.e., $T_{i+j-1} \geq P_{j} \beforall 1 \leq j \leq m$.
\label{problem:gtm}
\end{problem}

To reduce pattern matching with wildcard \chr{+} to greater-than matching, we perform run-length encoding \ref{def:rle} on both the text and the pattern. This way, we have a sequence of letters each along with a number showing the number of its repetitions, or a lower-bound restricting the number of repetitions in case we have a \chr{+} wildcard. Subsequently, we can solve the problem for letters and numbers separately, and merge the result afterwards. Note that pattern $P$ matches a substring of $T$ if and only if the both the respective letters and the respective repetition restrictions match. We also show we can find the run-length encoding of a string in $O(1)$ MPC rounds in Observation \ref{obs:rlempc}.

\newcommand{\cnt}{{\textsf{cnt}}}

\begin{definition}
For an arbitrary string $s$, let $s^\rle$ be the run-length encoding of $s$, computed in the following way:

\begin{itemize}
\item Ignoring \chr{+} characters, decompose string $s$ into maximal blocks consisting of the same character representing by pairs $\langle c_i, \cnt_i \rangle$ which show a block of $\cnt_i$ repetitions of character $c_i$. 
\item If a \chr{+} character is located immediately after a block or within a block, that block becomes a wildcard block, and represented as $\langle c_i, \cnt_i+ \rangle$, where $\cnt_i$ is still the number of the occurrences of character $c_i$ in the block.
\item $s^\rle$ equals the list of these pairs $\langle c_i, \cnt_i \rangle$ or $\langle c_i, \cnt_i+ \rangle$, concatenated in a way that preserves the original ordering of the string.
\end{itemize}
\label{def:rle}
\end{definition}

For example, for $T = \textsf{\str{bookkeeper}}$ and $P = \textsf{\str{o+o+k+ee+p}}$,
\begin{align*}
& T^\rle = \langle \langle \textsf{b}, 1 \rangle,
 \langle \textsf{o}, 2 \rangle,
 \langle \textsf{k}, 2 \rangle,
 \langle \textsf{e}, 2 \rangle,
 \langle \textsf{p}, 1 \rangle,
 \langle \textsf{e}, 1 \rangle,
 \langle \textsf{r}, 1 \rangle \rangle \\ 
& P^\rle = \langle \langle \textsf{o}, 2+ \rangle,
 \langle \textsf{k}, 1+ \rangle,
 \langle \textsf{e}, 2+ \rangle,
 \langle \textsf{p}, 1 \rangle \rangle
\end{align*} 
 
 We alternatively show the compressed string as a string, for example, 

\begin{itemize}
\item
$T^\rle = $
 \str{\textsf{b\color{red}[1]\color{black}o\color{red}[2]\color{black}k\color{red}[2]\color{black}e\color{red}[2]\color{black}p\color{red}[1]\color{black}e\color{red}[1]\color{black}r\color{red}[1]\color{black}}}.
\item
$P^\rle = $
 \str{\textsf{o\color{red}[2+]\color{black}k\color{red}[1+]\color{black}e\color{red}[2+]\color{black}p\color{red}[1]\color{black}}}.
\end{itemize}

\begin{observation}\label{obs:rlempc}
	We can perform run-length encoding in $O(1)$ MPC rounds.
\end{observation}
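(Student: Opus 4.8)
The plan is to describe a constant-round MPC procedure that, given a string $s$ of length $n$ distributed across $O(n^x)$ machines in contiguous blocks $s^1, \ldots, s^{n/\memory}$, produces the run-length encoding $s^\rle$ as defined in Definition~\ref{def:rle}. First I would have each machine locally scan its own block and compute the RLE of that block in isolation, recording for each block its first character $c_{\text{first}}$, the length of its leading run, its last character $c_{\text{last}}$, the length of its trailing run, and a boolean flag for whether a \chr{+} appears in the leading/trailing run (this flag matters for the wildcard-block marking). This is purely local and takes one round (indeed zero communication rounds).

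The only subtlety is that a maximal run of equal characters may span the boundary between consecutive blocks — in the worst case a single run could stretch across many blocks (e.g.\ $s$ is all one character). To stitch these together I would, in a second round, send each block's boundary summary (first/last character, leading/trailing run lengths, \chr{+} flags, and whether the block is entirely one character) to a single aggregation machine; since there are $O(n^x) \le O(n^{1-x})$ such summaries (using $x \le 1/2$, consistent with the rest of the paper) and each is $O(1)$ words, they fit in one machine's memory. That machine performs a single left-to-right pass to determine, for each block boundary, whether the trailing run of block $i$ merges with the leading run of block $i+1$, and more generally to compute prefix sums giving the total count of the (possibly cross-block) run straddling each boundary and whether any \chr{+} lies inside it. It then sends back to each machine the corrections: the true starting count contributed by preceding blocks for its leading run, and whether its leading run and trailing run become wildcard blocks. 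In a third round each machine locally outputs the pairs $\langle c_i, \cnt_i\rangle$ (or $\langle c_i, \cnt_i{+}\rangle$) for the runs strictly interior to its block unchanged, adjusts its leading run's count and wildcard flag using the received correction, and suppresses emitting its trailing run if that run is declared to continue into the next block (the block that "owns" the end of the run emits it). Concatenating these in block order yields $s^\rle$, and the total work is $O(n)$ and total memory $\tildorder(n)$.

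The main obstacle is handling the degenerate case of very long runs spanning $\omega(1)$ blocks while keeping everything in a constant number of rounds: a naive neighbor-to-neighbor merge would take $\Omega(n/\memory)$ rounds. Routing all boundary summaries to one machine and doing the merge via a single sequential scan (or, equivalently, a prefix-sum computation, which is itself $O(1)$ rounds in this regime) resolves this, but one must be careful that the aggregation fits in memory — this is exactly why the $x \le 1/2$ assumption already used elsewhere in the paper suffices. A secondary bookkeeping point is the \chr{+} placement: by Definition~\ref{def:rle} a \chr{+} anywhere in or immediately after a run turns the whole (merged) run into a wildcard block, so the flag must be OR-ed along the entire span of a cross-block run, which the single-machine scan handles directly. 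Everything else is routine local string processing.
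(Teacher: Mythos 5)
Your proposal is correct and follows essentially the same approach as the paper's proof: compute the run-length encoding locally within each block, then merge adjacent boundary runs in a subsequent round. The paper states this more tersely, simply asserting that the per-block RLEs can be stitched together at boundaries in the next round, whereas you explicitly work out the aggregation mechanism (routing $O(n^x)$ boundary summaries to one machine, which fits under the $x \le 1/2$ assumption) that handles runs spanning many blocks, and you also spell out the OR-propagation of the \chr{+} flag along a merged run — details the paper leaves implicit but which are needed for the argument to go through in the degenerate all-one-character case.
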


\begin{proof}
	Suppose string $s$ is partitioned among $\machines$ machines such that $i$'th machine contains $s_i$ for $1 \leq i \leq \machines$. Run-length encoding of each $s_i$ could be computed separately inside each machine. Let 
	$$ s_i^\rle = \langle \langle c_{i,1}, cnt_{i,1} \rangle, \langle c_{i,2}, cnt_{i,2} \rangle, \ldots, \langle c_{i,l_i}, cnt_{i,l_i} \rangle \rangle$$ 
	be the run-length encoding of $s_i$, where $l_i$ is its length. Then in the next round, we can merge $\langle c_{i,l_i}, cnt_{i, l_i} \rangle$ and $\langle c_{i+1,1}, cnt_{i+1,1} \rangle$ if $c_{i,l_i} = c_{i+1,1}$ for all $1 \leq i \leq \machines - 1$, and we will end up with $s^\rle$ if we concatenate all $s_i^\rle$s.
\end{proof}

As we mentioned before, in order to reduce from greater-than matching, it is possible to divide the problem into two parts: matching the letters, and ensuring whether the repetition constraints are hold, and solve each part separately. The former is a simple string matching problem, but the latter requires could be solved with greater-than matching. Formally, we need to find all indices $1 \leq i \leq n - m + 1$ such that for every $1 \leq j \leq m$, the following constraints holds:

\begin{enumerate}
\item $T^\rle_{i+j-1} = \langle c_j, x \rangle$ for some $x \geq \cnt_j$ if $P^\rle_j = \langle c_j, \cnt_j+ \rangle$.
\label{constraint:1}
\item $T^\rle_{i+j-1} = \langle c_j, \cnt_j \rangle$ if $P^\rle_j = \langle c_j, \cnt_j \rangle$ and $2 \leq j \leq m - 1$.
\label{constraint:2}
\item $T^\rle_{i+j-1} = \langle c_j, x \rangle$ for some $x \geq \cnt_j$ if $P^\rle_j = \langle c_j, \cnt_j \rangle$ and $j \in \{1, m\}$.
\label{constraint:3}
\end{enumerate}

Constraint \ref{constraint:3} needs to be considered to allow the substring in the original text $T$ starts and ends in the middle of a block of $c_1$ or $c_m$ letters. By considering only those substrings which their letters match, we can get rid of letter constraints. Also to ensure constraint \ref{constraint:2}, we can perform a wildcard \chr{?} pattern matching by replacing each $\cnt_j+$ and also $\cnt_1$ and $\cnt_m$ by a \chr{?} wildcard, and keep only the numbers that constraint \ref{constraint:2} checks. Thus, if a substring $T^\rle_{i, i+m-1}$ matches according to this wildcard \chr{?} pattern matching, what remains is a greater-than matching, as we only need to check constraints \ref{constraint:1} and \ref{constraint:3}, and we can replace numbers we already checked for constraint \ref{constraint:2} by some $0$'s. 

\begin{observation}
We can reduce pattern matching with wildcard \chr{+} from greater-than matching in $O(1)$ MPC rounds.
\label{obs:mpcgtmreduce}
\end{observation}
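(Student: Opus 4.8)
The plan is to realise the decomposition sketched above as a constant number of black-box invocations of already-established $O(1)$-round MPC primitives, glued together by a constant number of routing/prefix-sum rounds. First I would compute the run-length encodings $T^\rle$ and $P^\rle$; the paper establishes this can be done in $O(1)$ rounds (Observation~\ref{obs:rlempc}), and the same computation yields, for every block, the offset of its first character in the original string (a prefix sum of block lengths) — this is needed later for the coordinate change. Write $n'=|T^\rle|\le n$ and $m'=|P^\rle|\le m$, so every subproblem below is over an input of size $O(n)$.

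Next I would replace $\chr{+}$-matching, after compression, by three matching subproblems, all indexed by the same compressed starting block $i\in\{1,\dots,n'-m'+1\}$:
\begin{enumerate}
\item \emph{Letters.} Run the string-matching algorithm of Theorem~\ref{mpcsm} on the letter sequences $\langle c^T_1,\dots,c^T_{n'}\rangle$ and $\langle c^P_1,\dots,c^P_{m'}\rangle$, recording whether the letters of $P^\rle$ align with those of $T^\rle$ starting at block $i$ (for \emph{all} blocks, including the two boundary blocks). This is $O(1)$ rounds.
\item \emph{Exact counts (Constraint~\ref{constraint:2}).} Build a length-$m'$ pattern over the count alphabet whose $j$-th symbol is $\cnt_j$ when $P^\rle_j=\langle c_j,\cnt_j\rangle$ with $2\le j\le m'-1$, and is the wildcard $\chr{?}$ otherwise (i.e.\ for every $\chr{+}$-block and for $j\in\{1,m'\}$), and match it against $\langle \cnt^T_1,\dots,\cnt^T_{n'}\rangle$ using Corollary~\ref{smrepmpc}. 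This is $O(1)$ rounds.
\item \emph{Inequalities (Constraints~\ref{constraint:1} and \ref{constraint:3}).} Build the greater-than-matching instance of Problem~\ref{problem:gtm} with text array $A=\langle \cnt^T_1,\dots,\cnt^T_{n'}\rangle$ and pattern array $B$ where $B_j=\cnt_j$ if $P^\rle_j$ is a $\chr{+}$-block or $j\in\{1,m'\}$, and $B_j=0$ otherwise. This is the target problem, assumed solvable in $O(1)$ rounds.
\end{enumerate}
A compressed match at block $i$ is valid exactly when all three indicator arrays equal $1$ at $i$: subproblem (1) enforces the letter component of all three constraints; (2) enforces exactness on the interior blocks; (3) enforces the lower bounds on the $\chr{+}$-blocks and on the two boundary blocks, while the $B_j=0$ entries on interior blocks are vacuously satisfied and do not interfere. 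Correctness of ``all three together $\iff$ $P$ matches $T$ at the corresponding original position'' is precisely the decomposition argument given just before this observation.

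Then I would combine the three indicator functions: for each $i$, route the three bits to a common machine and take their conjunction — one round of the standard MPC routing/sorting primitive. Finally I would un-compress. A valid compressed match at block $i$ of $T^\rle$, with that block having $L_1=\cnt^T_i$ repetitions and $P^\rle_1$ carrying count bound $\cnt_1$, corresponds to exactly the original-text starting positions obtained by entering block $i$ with anywhere from $\cnt_1$ up to $L_1$ of its characters remaining; using the block-offset prefix sums from the first step these positions form an explicit arithmetic range, which I would emit (broadcasting the range across the possibly-multiple machines holding block $i$ when $L_1$ is large). Since distinct compressed matches have distinct first blocks, the total number of emitted positions, and hence the total memory, stays $O(n)$.

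The main obstacle is this last un-compression/alignment step: it is the only place where the coordinate change between the compressed and uncompressed strings must be carried out, and it must be shown to run in $O(1)$ rounds and linear total memory even when a single run-length block spans many machines (so that the correct arithmetic-progression set of original start positions is produced by a constant-round combination of prefix sums, intra-block broadcast, and routing). Everything preceding it — run-length encoding, the three matching subroutines, and the bitwise-AND combination — is already known to be $O(1)$ rounds, so the argument reduces to pinning down exactly which standard MPC primitives the un-compression needs and verifying that their round costs compose to a constant.
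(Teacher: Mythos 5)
Your core reduction matches the paper's exactly: run-length encode both strings (Observation~\ref{obs:rlempc}), solve the letter component by plain string matching (Theorem~\ref{mpcsm}), enforce the interior exact-count constraint~\ref{constraint:2} by a $\chr{?}$-wildcard match on the count sequence with the $\chr{+}$-blocks and the two boundary indices $j\in\{1,m'\}$ turned into $\chr{?}$'s (Corollary~\ref{smrepmpc}), and hand the remaining lower-bound constraints~\ref{constraint:1} and~\ref{constraint:3} to the greater-than-matching oracle with the already-checked entries zeroed out. That is precisely the paper's route, and your write-up is, if anything, more explicit about the final conjunction step than the paper is.

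Where you go beyond the paper is the un-compression from compressed block indices back to original-text start positions, which the paper's proof never spells out. Your treatment of that extra step has a small but real gap. You emit, for a valid compressed match at block $i$ with $L_1=\cnt^T_i$ repetitions and first-pattern-block bound $\cnt_1$, the full range of start positions entering block $i$ with anywhere from $\cnt_1$ up to $L_1$ characters remaining. That is correct when $P^\rle_1$ is a $\chr{+}$-block, and also in the degenerate case $m'=1$, but when $m'\ge 2$ and $P^\rle_1 = \langle c_1,\cnt_1\rangle$ carries no $\chr{+}$, the original match must consume \emph{exactly} $\cnt_1$ characters of block $i$: $P^\rle_2$ has a different letter, and in $T$ that letter only appears starting at the next block boundary, so there is exactly one valid start position rather than $L_1-\cnt_1+1$. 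The constraint-\ref{constraint:3} relaxation to $\ge$ is the right test for whether a compressed match exists, but it is not the right rule for un-compressing it; the fix is to emit a singleton whenever the first pattern block is exact and $m'\ge 2$. This does not affect the correctness or $O(1)$-round cost of the reduction itself, which is all the observation claims and which you and the paper establish the same way, but it would matter for a downstream implementation that must report original-text positions.
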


Using Observation \ref{obs:rlempc}, it only remains to perform $O(1)$ wildcard \chr{?} matchings to obtain an instance of greater-than matching that already satisfies constraint \ref{constraint:2}, as well as letter constraints. Thus, Observation \ref{obs:mpcgtmreduce} is implied.

\subsection{Reduction from subset matching}\label{ss:subset}

\begin{problem}
\emph{Subset Matching:} Given $T$, a vector of $n$ subsets of the alphabet $\Sigma$, and $P$, a vector of $m$ subsets in the same format as $T$, find all occurrences of $P$ in $T$. $P$ is occurred at position $i$ in $T$ if for every $1 \leq j \leq |P|$, $P_j \subseteq T_{i+j-1}$.\label{problem:subset}  \end{problem}

The subset matching problem is a variation of pattern matching where pattern matches text if each of the pattern entries is a subset of the respective entries in text. Keeping this in mind, we use subset matching to solve greater-than matching in $O(1)$ MPC rounds, and thereby achieving a $O(1)$-round solution for the problem of pattern matching with \chr{+} wildcard character using the subset matching algorithm proposed by Cole and Hariharan~\cite{cole2002verifying}.

\begin{observation}\label{lemma:mpcsubsetreduce}
We can reduce greater-than matching to subset matching in $O(1)$ MPC rounds with total runtime and total memory of $O(Q)$, where $Q$ is the sum of all $T_t$'s, i.e. $Q = \sum_{i = 1}^{m}{T_i}$.
\end{observation}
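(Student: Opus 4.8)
\noindent\emph{Proof plan.} The reduction itself is the textbook one, and the only real work is distributing it. Given a greater-than matching instance with integer arrays $T$ (length $n$) and $P$ (length $m$), I would map each $T_k$ to the set $\{1,2,\ldots,T_k\}$ and each $P_j$ to the singleton $\{P_j\}$; then $\{P_j\}\subseteq\{1,\ldots,T_{i+j-1}\}$ iff $P_j\le T_{i+j-1}$, so $P$ occurs at $i$ in the subset-matching instance exactly when $T_{i+j-1}\ge P_j$ for all $j$. The sets have total size $\sum_k T_k + m = O(Q)$, which matches the claimed bounds, so the content is to build this already-flattened instance across the machines in $O(1)$ rounds without overloading anyone. (The variant mentioned earlier, mapping $T_k$ to $\{1{+},\ldots,T_k{+}\}\cup\{T_k\}$ and keeping singleton pattern entries, works identically under the same layout.)

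First I would compute the prefix sums $s_k=\sum_{t\le k}T_t$ (with $s_0=0$) by the standard two-level method: each machine computes local prefix sums over the block $B_i$ of indices it holds, ships its block total to a coordinator, the coordinator prefix-sums the $O(n^x)$ block totals (which fit in one machine since $n^x\le n^{1-x}$ for $x\le 1/2$) and returns an additive offset, and each machine corrects its local values. This is $O(1)$ rounds and $O(Q)$ work. The expansion of index $k$ occupies the contiguous block of output positions $[s_{k-1}+1,s_k]$ in the flattened instance of length $Q$, which I lay out on $O(Q/\memory)=O(n^x)$ machines, machine $j$ owning positions $[(j-1)\memory+1,j\memory]$.

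Next is the routing. Machine $i$ now knows $s_{k-1}$ and $s_k$ for each $k\in B_i$, hence the output machines that $k$'s block touches --- those with index in $[\lceil (s_{k-1}+1)/\memory\rceil,\ \lceil s_k/\memory\rceil]$ --- and it sends each of them the short record $(k,s_{k-1},s_k)$. An output machine $j$ then, for each of its $\memory$ positions $p$, looks up the unique $k$ with $s_{k-1}<p\le s_k$ and emits the pair recording that element $p-s_{k-1}$ belongs to set $k$, which is the representation consumed by the subset-matching algorithm of Cole and Hariharan~\cite{cole2002verifying}. The pattern side needs no expansion, and all of the above takes $O(1)$ rounds.

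The step I expect to be the main obstacle is verifying that this routing respects the per-machine memory bound, since a single large count $T_k>\memory$ forces the expansion of that one index to span many machines. On the receiving end it is benign: any $\memory$ consecutive output positions are covered by at most $\memory$ distinct source indices (each contributes at least one of those positions, and the contributions are disjoint), so each output machine receives $O(\memory)$ records and emits $O(\memory)$ pairs. On the sending end, machine $i$ emits $\sum_{k\in B_i}\big(\lceil s_k/\memory\rceil-\lceil s_{k-1}/\memory\rceil+1\big)\le\sum_{k\in B_i}\big(T_k/\memory+O(1)\big)\le Q/\memory+O(|B_i|)$ records; since $Q=O(n)$ in our setting and $x\le 1/2$ this is $O(n^x)+O(n^{1-x})=O(\memory)$. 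Hence every machine sends and receives $O(\memory)$ words per round, the total communication and work are $O(Q)$, the round count is $O(1)$, and we obtain the desired reduction.
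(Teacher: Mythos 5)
Your reduction map is exactly the one the paper uses (replace each $T_k$ by $\{0,1,\ldots,T_k\}$, or equivalently $\{1,\ldots,T_k\}$, and each $P_j$ by the singleton $\{P_j\}$), so the core argument is the same; the paper merely states this map and asserts $O(1)$ rounds without spelling out the parallel layout. Your additional prefix-sum / routing / per-machine load analysis is correct and simply fills in implementation detail the paper leaves implicit, including the same standing assumption ($Q=O(n)$ coming from run-length encoding) that makes the machine counts work out.
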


The idea behind Observation \ref{lemma:mpcsubsetreduce} is to have a subset $\{ 0, 1, 2, \ldots, T_i\}$ instead of each entry of $T_i$, and a subset $\{ P_i \}$ instead of each $P_i$. This way if $T$ matches $P$ in a position $1 \leq i \leq n - m +1$, then we have $P_j \in \{ 0, 1, 2, \ldots, T_{i+j-1} \}$ since $P_j \leq T_{i+j-1}$ for all $1 \leq j \leq m$. This way, we can solve pattern matching with wildcard \chr{+} in $O(1)$ MPC rounds if subset matching could be solved in $O(1)$ MPC rounds. In the following, Lemma \ref{lemma:mpcsubset} shows it is possible to solve subset matching w.h.p in a constant number of MPC rounds. The algorithm utilizes $O(log(n))$ invocations of wildcard \chr{?} matching.

\begin{lemma}\label{lemma:mpcsubset}
	We can solve subset matching in $O(1)$ MPC rounds with total runtime and total memory of $\tildorder(Q)$, where $Q$ is the sum of the size all $T_i$'s, i.e. $Q = \sum_{i = 1}^{n}{|T_i|}$.
\end{lemma}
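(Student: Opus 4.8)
The plan is to adapt the randomized reduction of subset matching to $O(\log n)$ instances of wildcard-\chr{?} pattern matching due to Cole and Hariharan~\cite{cole2002verifying}, and then observe that each such instance can be solved in $O(1)$ MPC rounds by Corollary~\ref{smrepmpc}. First I would recall the key idea behind the sequential subset-matching algorithm: by randomly coloring the ground set (the alphabet $\Sigma$) with $O(\log n)$ independent color classes, one reduces the question ``is $P_j \subseteq T_{i+j-1}$ for all $j$'' to a collection of exact-matching-with-wildcards tests. Concretely, for each color class we form a text vector and a pattern vector over $\{0,1,?\}$: a position of the text gets a definite symbol if exactly one element of its subset falls in the current color class (and $0$/$?$ otherwise as dictated by the standard reduction), and similarly for the pattern; a genuine subset match survives all $O(\log n)$ tests, while a non-match is killed by at least one test with high probability. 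This is exactly the ``$O(\log n)$ invocations of wildcard \chr{?} matching'' mentioned before the statement.

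The key steps, in order, would be: (1) distribute $T$ and $P$ across the $O(n^x)$ machines so that the total input size is $\tildorder(Q)$; since $Q$ is the sum of the subset sizes, each machine holds $\tildorder(Q/n^x)$ elements and we keep total memory $\tildorder(Q)$. (2) In one MPC round, sample the $O(\log n)$ color classes (using shared randomness broadcast to all machines, which costs $O(\log n)$ words per machine) and, locally, build for each color class the corresponding $\{0,1,?\}$-text and pattern vectors; the local work here is linear in the data held. (3) Run the $O(1)$-round wildcard-\chr{?} matching algorithm of Corollary~\ref{smrepmpc} on all $O(\log n)$ instances in parallel — this is legitimate because the instances are independent and their combined size is $\tildorder(Q)$, so we can either run them on disjoint groups of machines or sequentially within the same $O(1)$ round budget (a constant times $O(1)$ is still $O(1)$). (4) In one final aggregation round, intersect the match-indicator vectors of the $O(\log n)$ instances position by position: position $i$ of $T$ is a true occurrence iff it is reported as a match in every one of the $O(\log n)$ instances. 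Output these positions.

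The main obstacle I expect is two-fold. First, one must verify the correctness/probability bound of the color-based reduction carefully: with $O(\log n)$ random color classes one needs a union bound over the $\le n$ candidate positions to guarantee that every spurious position is eliminated with high probability, and one should track that the constant in $O(\log n)$ is chosen to beat the union bound — this is standard but must be stated. Second, and more MPC-specific, is ensuring the reduction itself is implementable with the claimed round and memory budget: broadcasting the $O(\log n)$ color assignments to all machines, and rearranging the derived vectors so that Corollary~\ref{smrepmpc} can be invoked, must not blow up the per-machine memory beyond $O(n^{1-x})$ nor the total beyond $\tildorder(Q)$; since each element of each subset contributes to exactly one derived entry per color class, the total derived data is $\tildorder(Q)$, which is within budget. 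Finally, the running-time accounting: $O(\log n)$ FFT-based instances each of total time $\tildorder(Q)$ give total time $\tildorder(Q)$, matching the claim.
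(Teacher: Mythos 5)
Your proposal takes a genuinely different route from the paper, and that route has a correctness gap. You propose to color the alphabet $\Sigma$ into $O(\log n)$ random color classes, build one wildcard-\chr{?} text/pattern pair per class (a position gets a definite symbol when exactly one element of its subset falls in that class, \chr{?} otherwise), and intersect the $O(\log n)$ match sets. The soundness of this scheme hinges on isolation: for a spurious alignment $i$ there is a witness $c \in P_j \setminus T_{i+j-1}$, and you need at least one color class in which $c$ is the \emph{unique} colored element of $P_j$ and $T_{i+j-1}$ has at most one colored element (and it isn't $c$). But with only $O(\log n)$ color classes, the probability that a subset $S$ has exactly one element in a given class is about $\tfrac{|S|}{\log n}\bigl(1-\tfrac{1}{\log n}\bigr)^{|S|-1}=e^{-\Omega(|S|/\log n)}$, which is vanishingly small once $|S|\gg\log n$. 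And subsets in this paper \emph{are} that large: after the greater-than-matching reduction (Observation~\ref{lemma:mpcsubsetreduce} and the proof of Theorem~\ref{mpcsmrepeat}), $T'_i=\{\langle c_i,1+\rangle,\dots,\langle c_i,\mathrm{cnt}_i+\rangle,\langle c_i,\mathrm{cnt}_i\rangle\}$ has size $\mathrm{cnt}_i+1$, which can be $\Theta(n)$. So almost every text position becomes a \chr{?} in almost every derived instance, a spurious alignment survives all $O(\log n)$ tests with overwhelming probability, and the union bound you gesture at never gets started. Isolation of the kind you want would need $\Omega(\max_i|T_i|)$ color classes, not $O(\log n)$.

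The paper's proof avoids this by never coloring the alphabet. It forms one sparse wildcard-\chr{?} instance \emph{per character} $j\in\Sigma$ (nonzeros at the text/pattern positions whose subset contains $j$); the total number of nonzeros across all characters is $O(Q)$. It then invokes the Cole--Hariharan~\cite{cole2002verifying} reduction from a sparse instance with $k$ nonzeros to $O(\log k)$ \emph{dense} wildcard-\chr{?} instances of size $O(k)$: this reduction hashes the \emph{positions/indices} of the sparse vectors into a compressed array, and the union bound is over index collisions, so its cost scales with the sparsity $k$ rather than with $n$ or with $\max_i|T_i|$. This is what makes the $\tildorder(Q)$ budget attainable. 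The sorting/partitioning by character is done in $O(1)$ rounds via~\cite{goodrich2011sorting}, and each dense instance is solved by Corollary~\ref{smrepmpc}. If you want to retain a randomized-hashing flavor in your write-up, the object to hash is the set of nonzero positions in the per-character sparse vectors, not the alphabet.
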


\newcommand{\dec}{{\mathcal T}}
\newcommand{\decp}{{\mathcal P}}

\begin{proof}
	First, consider solving an instance of the greater than matching problem in $O(1)$ MPC rounds.
	We partition the entries inside all $T^i$'s into $\Sigma$ sequences $T^{i,1}, T^{i,2}, \ldots, T^{i,\Sigma}$ inside each machine, where $T^{i,j} = \{k \mid j \in T_{iS + k}\}$. We just create a subset of these sets which are not empty, i.e., $\dec_i = \{T^{i,j} \mid |T^{i,j}| > 0\}$, so that $\dec_i$ fits inside the memory of each machine. $\decp_i$ also can be defined similarly. We can sort all the union of all $\dec_i$'s and $\decp_i$'s in a non-decreasing order of $j$ and breaking ties using $i$ in $O(1)$ MPC rounds~\cite{goodrich2011sorting}. 
	
	This way, we end up with a sparse wildcard matching for each character in $\Sigma$, because we want to check in which substrings of $T$ each occurrence of character $j \in \Sigma$ in $P$ is contained in the corresponding $T_i$. We can put a \chr{1} instead of each occurrence of $j$ in $P$,  a \chr{?} instead of each $j$ in $T$, and a \chr{0} in all other places since we are considering a greater than matching instance. Using the similar algorithm as in~\cite{cole2002verifying}, we can easily reduce each instance of sparse wildcard matching to $O(\log(k))$ normal wildcard matching with size of $O(k)$ in $O(1)$ MPC rounds, where $k$ is the number of non-zero entries. Using Corollary \ref{smrepmpc}, we can give a $O(1)$ round MPC algorithm for subset matching when the input is an instance of greater than matching. We can easily extend these techniques to solve subset matching algorithm in $O(1)$ MPC rounds, as Cole and Hariharan ~\cite{cole2002verifying} showed it is analogous to sparse wildcard matching.
\end{proof}

\begin{theorem}
There exists an MPC algorithm that solves string matching with \chr{+} wildcard in constant rounds. The total memory and the total running time of the algorithm are $\tildorder(n)$.
\label{mpcsmrepeat}
\end{theorem}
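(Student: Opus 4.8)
The plan is to obtain Theorem~\ref{mpcsmrepeat} purely by composing the $O(1)$-round building blocks assembled in Sections~\ref{charreplace} and~\ref{charrepeat}, namely: run-length encoding in $O(1)$ rounds (Observation~\ref{obs:rlempc}), plain string matching in $O(1)$ rounds (Theorem~\ref{mpcsm}), \chr{?}-wildcard matching in $O(1)$ rounds (Corollary~\ref{smrepmpc}), and subset matching in $O(1)$ rounds (Lemma~\ref{lemma:mpcsubset}). Since there will be only constantly many stages, and each stage costs $O(1)$ rounds, the round bound is immediate; the work is in checking correctness of the decomposition and tracking the total-memory parameter.

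First I would compute $T^\rle$ and $P^\rle$ via Observation~\ref{obs:rlempc}; note $|T^\rle|\le n$, $|P^\rle|\le m$, and the encodings have total size $O(n+m)$. Then I would split, as in Subsection~\ref{ss:gtm}, the condition ``$P$ matches $T$ at original position $i$'' into a letter part and a repetition part: after run-length encoding these decouple, with the subtlety that the matched window in $T$ may begin or end in the interior of a maximal block of $c_1$ or $c_m$, which is exactly what Constraints~\ref{constraint:1}--\ref{constraint:3} encode. The letter part is an instance of Problem~\ref{smatch} on strings of length $\le n$ and $\le m$, solved in $O(1)$ rounds by Theorem~\ref{mpcsm}. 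For the repetition part, Observation~\ref{obs:mpcgtmreduce} reduces it in $O(1)$ rounds---using a constant number of \chr{?}-matchings, each $O(1)$ rounds by Corollary~\ref{smrepmpc}, to discharge Constraint~\ref{constraint:2} and the letter constraints---to greater-than matching; Observation~\ref{lemma:mpcsubsetreduce} reduces greater-than matching to subset matching in $O(1)$ rounds; and Lemma~\ref{lemma:mpcsubset} solves subset matching in $O(1)$ rounds. Finally I would intersect the surviving-position sets of the letter and repetition parts by a position-wise AND (one routing round, each position sent to a fixed machine) and translate compressed-domain indices back to original-domain indices using prefix sums of the $\cnt_i$'s---computable in $O(1)$ rounds when $x<1/2$, just as in the aggregation step of Algorithm~\ref{alg:nw2}.

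The nontrivial accounting is the total-memory (equivalently total-work) bound. The subset-matching instance produced by Observation~\ref{lemma:mpcsubsetreduce} replaces the $t$-th text entry by $\{0,1,\dots,T^\rle_t\}$, of size $T^\rle_t+1$, so its size parameter is $Q=\sum_t(T^\rle_t+1)$; this looks dangerous, but $\sum_t T^\rle_t$ is at most the number of characters of $T$ lying in non-boundary blocks, hence $Q=O(n)$, and symmetrically the pattern side contributes $O(m)$. Therefore Lemma~\ref{lemma:mpcsubset} runs with total work and memory $\tildorder(n+m)=\tildorder(n)$, the \chr{?}-matchings and the plain string matching each cost $\tildorder(n)$, and the reductions and the final AND are linear; summing the constant number of stages gives total work and memory $\tildorder(n)$, as claimed.

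The step I expect to be the main obstacle is not any individual reduction but making the boundary-block handling (Constraints~\ref{constraint:1}--\ref{constraint:3}) and the compressed-to-original index translation interact cleanly in the MPC model while still fitting the required per-machine auxiliary data into $O(n^{1-x})$ memory as $x\to 1/2$; once that plumbing is in place, the theorem follows by straightforwardly chaining the already-established $O(1)$-round primitives.
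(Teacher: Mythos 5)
Your argument is correct, and it is essentially a careful execution of the chain of reductions that Section~\ref{charrepeat} lays out in its prose: run-length encode (Observation~\ref{obs:rlempc}), check letters via plain string matching (Theorem~\ref{mpcsm}) and Constraint~\ref{constraint:2} via \chr{?}-matching (Corollary~\ref{smrepmpc}), push Constraints~\ref{constraint:1} and~\ref{constraint:3} through greater-than matching to subset matching (Observation~\ref{lemma:mpcsubsetreduce}, Lemma~\ref{lemma:mpcsubset}), and intersect the surviving positions, translating compressed indices back with a prefix sum. The paper's own proof of Theorem~\ref{mpcsmrepeat} takes a shortcut: it collapses the letter test and all three count constraints into a single subset-matching instance over the enlarged alphabet of pairs $\langle\text{character},\text{count}\rangle$, setting $T'_i=\bigcup_{j=1}^{\cnt_i}\{\langle c_i,j+\rangle\}\cup\{\langle c_i,\cnt_i\rangle\}$ and $P'_i$ to a singleton, so no separate string-matching or \chr{?}-matching passes are needed. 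Your longer route has the advantage of automatically getting the boundary case right: as literally written the paper's $P'_i=\{\langle c_i,\cnt_i\rangle\}$ for a non-wildcard entry demands an \emph{exact} count even when $i\in\{1,m\}$, which is precisely where Constraint~\ref{constraint:3} relaxes equality to $\geq$ because the occurrence can start or end mid-block; for full equivalence the streamlined reduction should use $P'_i=\{\langle c_i,\cnt_i+\rangle\}$ at the two boundary positions (or post-filter), whereas your chain already routes that case through greater-than matching. One small nit in your accounting: $Q=\sum_t(\cnt_t+1)\le n+|T^\rle|\le 2n=O(n)$ holds outright since $\sum_t\cnt_t=n$, with no need to restrict to non-boundary blocks.
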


\begin{proof}
We can also simplify the algorithm for pattern matching with wildcard \chr{+}, by exploiting subset matching flexibility. We can also use subset matching to verify constraint \ref{constraint:2} of greater-than matching reduction (instead of wildcard \chr{?} pattern matching), as well as letter constraint (instead of regular string matching with no wildcard). We define $T'$ and $P'$ as follows:

\begin{align}
T'_i &= \bigcup_{j=1}^{cnt_i}{\{ \langle c_i, j+ \rangle\}} \cup \{ \langle c_i, cnt_i \rangle\} 
& \textsf{if } T^\rle_i = \langle c_i, cnt_i \rangle \\
P'_i &= \{ \langle c_i, cnt_i \rangle\} 
& \textsf{if } P^\rle_i = \langle c_i, cnt_i \rangle\\
P'_i &= \{ \langle c_i, cnt_i+ \rangle\} 
& \textsf{if } P^\rle_i = \langle c_i, cnt_i+ \rangle
\end{align}

It could easily be observed that subset matching of $T'$ and $P'$ is equivalent to pattern matching with wildcard \chr{+} of $T$ and $P$, and also this simpler reduction is straight-forward to implement in $O(1)$ MPC rounds. In addition, the total runtime and memory of this subset matching which is $\tildorder(Q )$ is equal to $\tildorder(n)$ in the original input. Note that $T'$ is resulted form $T^\rle$ whose sum of its numbers, that each of them shows the repetitions of the respective letter, equals $n$.
\end{proof}

\section{String Replace Wildcard `*'}\label{strreplace}
In this section we consider the string matching problem with wildcard `*'. 

Given strings $s \in \alf^n$ and $p \in \{\alf \cup \text{\chr{*}}\}^m$, 
we say a substring of $p$ is a subpattern if it is a maximal substring not containing `*'.
We present the following results in this section:
\begin{enumerate}
\item A sequential algorithm for string matching with wildcard `*' in time $O(n+m)$.
\item An MPC algorithm  for string matching with wildcard `*' in $O(\log n)$ rounds using $O(n^x)$ machines if the length of pattern $p$ is at most $O(n^{1-x})$.
\item An MPC algorithm  for string matching with wildcard `*' in $O(\log n)$ rounds using $O(n^x)$ machines if all the subpatterns of $p$ are not prefix of each other.
\end{enumerate}

\subsection{Linear time sequential algorithm}
\begin{observation}\label{ob:star}
Given strings $s \in \alf^n$ and $p \in \{\alf \cup \text{\chr{*}}\}^m$, 
there is a sequential algorithm to decide if $s$  matches with pattern $p$ in time $\tilde{O}(n+m)$.
\end{observation}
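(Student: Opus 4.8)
The plan is to reduce the question to a left-to-right sequence of ordinary (wildcard-free) string-matching queries solved greedily. First I would parse $p$ into its subpatterns $P_1,P_2,\ldots,P_\ell$ (the maximal `*'-free substrings), collapsing runs of consecutive `*' since $**$ is equivalent to a single $*$, and recording two boolean flags: whether $p$ begins with a `*' and whether $p$ ends with a `*'. The claim to establish is that $s$ matches $p$ if and only if one can place $P_1,\ldots,P_\ell$ as pairwise non-overlapping substrings of $s$, occurring in left-to-right order, subject to the boundary conditions that the placement of $P_1$ must be a prefix of $s$ when $p$ has no leading `*', and the placement of $P_\ell$ must be a suffix of $s$ (ending exactly at position $n$) when $p$ has no trailing `*'. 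The degenerate case $\ell=0$ (where $p$ is a string of `*'s, or the empty string) is handled directly: $s$ matches iff $p$ consists only of `*'s, or, if $p$ is empty, iff $s$ is empty; and when $\ell=1$ both boundary conditions apply to $P_1$.

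Next I would run the greedy scan: keep a pointer $t$ into $s$, starting at $t=1$. For $P_1$: if there is a leading `*', set $t$ to one past the leftmost occurrence of $P_1$ anywhere in $s$ (reject if none); otherwise require $P_1$ to be a prefix of $s$ and set $t=|P_1|+1$. For each $k=2,\ldots,\ell$: locate the leftmost occurrence of $P_k$ in $s$ starting at an index $\ge t$, reject if none exists, and otherwise advance $t$ to one past the end of that occurrence. For the last subpattern $P_\ell$, if $p$ has no trailing `*' I instead check directly that $s[n-|P_\ell|+1,\,n]=P_\ell$ and that $n-|P_\ell|+1\ge t$. Accept iff every step succeeds. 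Each ``leftmost occurrence of $P_k$ at index $\ge t_k$'' query is answered by running KMP for pattern $P_k$ over the suffix $s[t_k,n]$, halting at the first full match ending at some index $e_k$; the scanned portion has length $e_k-t_k+1$, and since the $t_k$ are strictly increasing (with $t_{k+1}\le e_k+1$) these scanned portions telescope to $O(n)$, while the KMP preprocessing costs $\sum_k O(|P_k|)=O(m)$. Hence the total time is $O(n+m)$; the $\tilde{O}$ in the statement merely leaves room for a hashing-based implementation, but KMP already gives the clean linear bound.

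The only part that genuinely needs argument is the correctness of greedy, i.e., that matching each $P_k$ as far left as possible never destroys a solution that exists. I would prove this by a standard exchange/induction argument: given any valid placement with start positions $q_1<q_2<\cdots<q_\ell$ (respecting the boundary conditions) and the greedy placement with starts $g_1\le g_2\le\cdots$, show by induction on $k$ that $g_k$ is well-defined and that the greedy pointer after processing $P_k$ is no larger than the position just past the $k$-th block of the valid placement; the inductive step is immediate because ``leftmost occurrence at index $\ge t$'' is monotone in $t$. The boundary conditions must be folded into this induction: when there is no leading `*' both placements pin $P_1$ to position $1$, and when there is no trailing `*' one checks that at the final step the greedy pointer satisfies $t\le n-|P_\ell|+1$, which follows from the induction hypothesis applied against the (pinned) valid placement of $P_\ell$. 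I expect this monotonicity-and-boundary bookkeeping to be the main, though routine, obstacle; everything else is immediate from the properties of KMP.
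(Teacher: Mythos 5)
Your proposal matches the paper's approach exactly: greedily advance a pointer through $s$, using KMP to find the leftmost occurrence of each subpattern $P_k$ at or after the current pointer, and accept iff every subpattern is placed. The paper's \textsf{StringMatchingWithStar}($a$) is precisely this scan; your write-up is simply more careful than the paper's, in that it explicitly handles the boundary conditions when $p$ does not begin or end with \chr{*} and spells out the exchange argument justifying the greedy choice, both of which the paper leaves implicit.
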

Let the subpatterns of $p$ to be $P_1, P_2, \dots, P_w$. Our sequential algorithm is \textsf{StringMatchingWithStar}($a$).

\begin{algorithm}[h!]
\KwData{
$s \in \alf^n$ and $p \in \{\alf \cup \text{\chr{*}}\}^m$.}
\KwResult{
Yes or No. 
}
Set $i \leftarrow 1$\;

	\For {$j = 1, 2, \dots, w$ and $i \leq n$}{
		Run KMP for the string $s[i, n]$ and pattern $P_j$\;
		If KMP fails, then Return No\;
		Let $i'$ be the position satisfying $s[i', i' + |P_j| - 1] = P_j$ obtained by KMP\;
		Set $i \leftarrow i' + |P_j|$\;
	}
Return Yes.

\caption{\textsf{StringMatchingWithStar}($a$)}
\label{alg:ewdcs}
\end{algorithm}

\subsection{MPC algorithm for small subpattern}

\begin{theorem}\label{thm:star_1}
Given strings $s \in \alf^n$ and $p \in \{\alf \cup \text{\chr{*}}\}^m$ for $m = O(n^{1 - x})$, 
there is an MPC algorithm to find the solve the string matching problem in $O(\log n)$ rounds using $O(n^x)$ machines. 
\end{theorem}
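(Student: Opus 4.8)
The plan is to reduce to the front-to-back greedy of Observation~\ref{ob:star} and to execute that greedy in parallel as a logarithmic-depth composition of per-block transition functions. Write the subpatterns of $p$ as $P_1,\dots,P_w$; since $w\le m=O(n^{1-x})$ the whole pattern, and hence the list of subpatterns, fits on a single machine. First I would lay $s$ into $N=\Theta(n^x)$ consecutive overlapping blocks $B_j=s[(j-1)L+1:jL+m]$ of stride $L=\Theta(n^{1-x})$ (truncating $B_N$ at $n$), so each $B_j$ has $O(n^{1-x})$ characters, the blocks have total size $O(n)$, and block $B_j$ ``owns'' the starting positions $[(j-1)L+1,\,jL]$; the extra $m$ characters (``double covering'') guarantee that any occurrence of a length-$\le m$ subpattern whose start lies in $B_j$'s owned range is contained entirely in $B_j$. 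In parallel I would broadcast $p$ to all $N$ machines, which takes $O(1)$ rounds (scatter $p$ into $N$ pieces, then all-gather, each message of total size $O(n^{1-x})$), or $O(\log n)$ rounds by plain doubling, which is within budget anyway.

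Next each machine computes locally a transition function $\tau_j:\{1,\dots,w+1\}\to\{1,\dots,w+1\}$ encoding what the greedy does inside $B_j$: on input $c$ it sets the lower bound $\ell=(j-1)L+1$ and repeatedly takes the leftmost occurrence of $P_c$ in $B_j$ with start $\ge \ell$; if that occurrence exists and starts at $\le jL$ it commits it (updating $\ell\leftarrow \text{start}+|P_c|$ and $c\leftarrow c+1$) and continues, otherwise it stops and returns the current $c$ (with $\tau_j(w+1)=w+1$). The structural claim, proved by induction on $j$, is that the global greedy split at the block boundaries equals $\tau_N\circ\cdots\circ\tau_1$: whenever control reaches $B_j$ carrying index $c$, the leftmost occurrence of $P_c$ at or beyond the true lower bound already starts at $\ge(j-1)L+1$ --- otherwise $B_{j-1}$, which contains all of $s[(j-2)L+1:(j-1)L+m]$, would have committed it --- so clamping the lower bound to the block start changes nothing. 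Consequently $s$ matches $p$ iff $(\tau_N\circ\cdots\circ\tau_1)(1)=w+1$ (anchoring the first/last subpattern when $p$ has no leading/trailing $*$ is a trivial modification of $\tau_1$ and of the final test), and each $\tau_j$ is just an array of $O(w)=O(n^{1-x})$ values.

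Finally I would compose $\tau_1,\dots,\tau_N$ by a balanced binary tree of function compositions: composing two functions on $\{1,\dots,w+1\}$ costs $O(w)$ time and space, there are $O(N)$ compositions in all (so $O(Nw)=O(n)$ work and $O(n)$ total memory, with each machine holding $O(1)$ of these $O(w)$-size arrays at any time), and the tree has depth $O(\log N)=O(\log n)$, so the fold runs in $O(\log n)$ MPC rounds; the answer is read off at the root as whether the composed function maps $1$ to $w+1$. I expect the main obstacle to be the local step: a naive per-input simulation of $\tau_j$ can take $\Theta(w\,|B_j|)$ time, which is too much, so the delicate point is to compute all $w$ entries of $\tau_j$ in $\tilde{O}(|B_j|+m)$ time --- e.g.\ build the Aho--Corasick automaton of $P_1,\dots,P_w$, scan $B_j$ once to record, for every automaton node, the first scan position at which it is reached (one subtree-minimum pass over the suffix-link tree), and then answer the ``leftmost occurrence beyond $\ell$'' queries needed by the greedy with a range-successor structure over those occurrences. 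Everything else --- the overlapping-block layout, the correctness of the boundary clamping, and the logarithmic-round fold --- is routine.
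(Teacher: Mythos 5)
Your overall plan—overlap the blocks by $m$ so that every subpattern occurrence lives entirely in one block, encode each block's effect as a transition function, and fold the transition functions in a balanced tree—is a reasonable alternative to the paper's dynamic program, and the round- and work-counting is fine. But the transition function you chose carries too little state, and the induction step you sketch does not actually close.

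The gap is in the claim that clamping the lower bound to the block start ``changes nothing.'' Your argument shows that when control enters $B_j$ with true lower bound $\ell^*$, the leftmost occurrence of $P_c$ at or beyond $\ell^*$ is at $\geq (j-1)L+1$ (otherwise $B_{j-1}$ would have seen it). That rules out \emph{missing} an occurrence; it does not rule out \emph{picking up a spurious one}. When the subpattern $P_{c-1}$ committed by $\tau_{j-1}$ ends past the boundary $(j-1)L$, we have $\ell^*>(j-1)L+1$, and $\tau_j$, which resets to $\ell=(j-1)L+1$, may commit an occurrence of $P_c$ in the dead zone $[(j-1)L+1,\ell^*-1]$ that overlaps the already-committed $P_{c-1}$. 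Concretely, take $s=\textsf{aaaaxyaaaa}$ ($n=10$), $p=\textsf{xy*y}$ with implicit leading/trailing \chr{*}, so $P_1=\textsf{xy}$, $P_2=\textsf{y}$, and let $L=5$ so block $1$ owns $[1,5]$ and block $2$ owns $[6,10]$. The greedy commits $\textsf{xy}$ at position $5$, sets $\ell^*=7$, finds no $\textsf{y}$ at $\geq 7$, and correctly reports no match. In your scheme $\tau_1(1)=2$ (commit $\textsf{xy}$ at $5$, then fail to find $\textsf{y}$ at $\geq 7$ inside $B_1$), but $\tau_2$ resets $\ell$ to $6$, finds $\textsf{y}$ at position $6$ (which overlaps the just-committed $\textsf{xy}$), and returns $3=w+1$; the composition wrongly reports a match. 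The information lost is exactly the amount by which $\ell^*$ overhangs the block boundary, which can be as large as $m-1$, and $\tau_j:\{1,\dots,w+1\}\to\{1,\dots,w+1\}$ has no slot for it. Naively enlarging the state to $(c,\delta)$ with $\delta\in\{0,\dots,m-1\}$ blows up the per-function table to $\Theta(wm)$, which no longer fits in $O(n^{1-x})$ memory or composes in $O(n)$ total work.

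This is precisely the difficulty the paper's proof is built around. Instead of a subpattern index, the paper's merge state $f(i,j,k)$ is a \emph{pattern position} $k\in[m]$, so the boundary state records not only which subpattern is pending but also how far into it the boundary falls, which pins down the overhang; and since even that greedy position may not be the right one to continue from on the other side, Eq.~\ref{equ:logrounds} enumerates the alternative boundary positions $\beta-\ell\cdot h(\beta)$ along the period $h(\beta)$ of the partially matched prefix (a KMP-failure-style argument) and takes the max. Your overlapping-block trick does dispense with the ``a subpattern spans two blocks'' concern and is a nice simplification there, but it does not remove the need to carry and reconcile the boundary offset, which is where the real content of the theorem sits. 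To repair your proof you would need either to enrich the transition state to something equivalent to the paper's pattern position and reprove that $O(m)$-size transition tables compose correctly (essentially reproducing the paper's periodicity lemma), or find a different argument that the overhang can be ignored, which the counterexample shows it cannot.
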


We assume string $s$ is partitioned into $s_1, s_2, \dots, s_t$ for some $t = O(n^x)$ such that every $s_i$ has length at most $O(n^{1-x})$.
We say string $s$ is an exact matching of $p$ if there is a partition of $s$ into $|p|$ (possibly empty) substrings such that if $p[i]$ is not `*', then $i$-th substring is same to $p[i]$.

Given  indices $i, j \in [t]$ of string $s$ and position $k$ of pattern $p$,
let $f(i, j, k)$ be the largest position of $p$ such that the concatenation of $s_i, s_{i+1}, \dots, s_j$ matches pattern $p[k, f(i, j, k)]$. 

To illustrate our idea, we need the following definitions:
\begin{enumerate}
	\item $g(k)$ for position $k$ of $p$: the largest position which is smaller than or equal to $k$ such that $p[k]$ is `*'.
	\item $h(k)$ for position $k$ of $p$ such that $p[k] \neq$ `*': the smallest integer $r$ such that $p[g(k) + 1, k - r]$ is a prefix of $p[g(k) + 1, k]$.
\end{enumerate}

Consider the following equation for an arbitrary $i \leq i' < j$ (let $\beta = f(i, i', k)$)

\begin{itemize}
\item
if $p[\beta] = \text{\chr{*}}$:
$$ f(i, j, k) = f(i'+1, j, \beta) $$
\item
if $p[\beta] \neq \text{\chr{*}}$:

\end{itemize}
\begin{equation}\label{equ:logrounds}f(i, j, k) =
\max
\begin{cases}
f(i'+1, j, g(\beta)), \\ 
\max_{0 \leq \ell \leq \lfloor \frac{\beta - h(\beta)}{g(\beta)}\rfloor }\left\{f(i'+1, j, \beta- \ell\cdot h(\beta ))\right\}
\end{cases}
\end{equation}
Our algorithm is \textsf{StringMatchingWithStar}(b). 

\begin{proof}[of \ref{thm:star_1}]
We show that Eq.~\ref{equ:logrounds} correctly computes $f(i, j, k)$ for an arbitrary $i \leq i' < j$. Then the theorem follows from the description of \textsf{StringMatchingWithStar}(b),  Eq.~\ref{equ:logrounds}  and Observation~\ref{ob:star}. 

Let $\alpha$ be the largest position of $p$ such that 
there is an exact matching of the concatenation of $s_i, \dots, s_j$ and $p[k, \alpha]$. 
If $f(i, i', \cdot)$ and $f(i'+1, j, \cdot)$ are correct, and $f(i, j, k)$ is computed by Eq.~\ref{equ:logrounds}, then $f(i, j, k) \leq \alpha$, since  Eq.~\ref{equ:logrounds} implies a feasible exact matching.

Now we show that $f(i, j, k) \geq \alpha$ if  $f(i, j, k)$ is computed by Eq.~\ref{equ:logrounds}.
There is an exact matching of the concatenation of $s_i, \dots, s_j$ and $p[k, \alpha]$. 
Let $\gamma$ be the position of pattern $p$ such that the last symbol of $s_{i'}$ is matched to $p[\gamma]$.
By the definition of function $f$, $\gamma \leq \beta$.

Consider the case of $p[\gamma]  = $`*' or $p[\gamma]  \neq $`*' but $p[\gamma]  = $`*'.
We have $g(\beta) \geq \gamma$ and $f(i'+1, j, \gamma) = \alpha$.
By the monotone property of $f$, we have $f(i, j, k) \geq f(i'+1, j, g(\beta)) \geq f(i'+1, j, \gamma)  = \alpha$.

Consider the case of $p[\gamma] \neq $`*' and $p[\beta] \neq $`*'.
If $\gamma$ and $\beta$ are in different subpatterns, then using above argument, we have $f(i, j, k) \geq  \alpha$.
Otherwise, $p[h(\beta) + 1, \gamma]$ is a suffix of $p[h(\beta)+1, \beta]$, which implies that there is a non-negative integer $\ell$ such that $\gamma = \beta - \ell \cdot h(\beta)$.
Hence, $f(i, j, k) \geq  \alpha$.
\end{proof}

\begin{algorithm}[h!]
\KwData{
two array $s$ and $p$.}
\KwResult{
Yes or No. }

Distribute $s_1, s_2, \dots, s_t$ to distinct machines, and distribute $p$ to every machine\;

Compute $f(i, i, k)$ for all the $k$ on the machine containing $s_i$ by algorithm  \textsf{StringMatchingWithStar}($a$) \textbf{in parallel}\;

\For{$m = 1, 2, \dots,  \lceil \log_2 t \rceil$}{
	For every $i \geq 1$,  put $f(i, i+2^{m - 1} - 1, k)$ and $f(i+2^{m-1}, i+2^m - 1, k)$ into same machine for all the $k$  \textbf{in parallel}\;
	For every $i \geq 1$, compute $f(i, i+2^m - 1, k)$ for all the $k$ by Equation~\ref{equ:logrounds} with $j = i+2^m - 1$ and $i' = i+2^{m-1} - 1$ \textbf{in parallel}\;

}
Return Yes if $f(1, t, 1) = w$, otherwise return No. 
\caption{\textsf{StringMatchingWithStar}(b)}
\label{alg:ewdcs}
\end{algorithm}
\subsection{MPC algorithm for non-prefix subpatterns}

\begin{theorem}\label{thm:star_2}
	Given strings $s \in \alf^n$ and $p \in \{\alf \cup \text{\chr{*}}\}^m$ such that  subpatterns are not a prefix of each other, 
	there is an MPC algorithm to find the solve the string matching problem in $O(\log n)$ rounds using $O(n^x)$ machines. 
\end{theorem}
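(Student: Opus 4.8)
\textbf{Proof proposal for Theorem~\ref{thm:star_2}.}
The plan is to reduce the problem to a single reachability query in a sparse functional graph on $[n]$, and then invoke the standard $O(\log n)$-round MPC connectivity primitive of Karloff et al.~\cite{karloff2010model}. First I would parse $p$ into its subpatterns $P_1,\dots,P_w$ together with their lengths $\ell_k=|P_k|$; since $\sum_k \ell_k \le m$ this costs $O(1)$ rounds. The only place the hypothesis enters is the following structural fact: if $P_a$ and $P_b$ both occur at the same position $i$ of $s$, then both are prefixes of $s[i,n]$, hence one is a prefix of the other, hence $a=b$. So there is a well-defined partial \emph{level} function $\lambda$ on $[n]$ with $\lambda(i)=k$ when $P_k$ occurs at $i$ and $\lambda(i)=\bot$ otherwise; equivalently, the occurrence lists $L_k=\{\,i:\ s[i,i+\ell_k-1]=P_k\,\}$ are pairwise disjoint, so $\sum_k |L_k|\le n$.

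The second step is to compute all the (sorted) lists $L_k$ by a dictionary-matching subroutine. I would build the partially decomposable hashing data structure of Section~\ref{nowildcard} for $s$ (and the analogous structure for $p$, which lets us read off $\hash(P_k)$), and then locate each subpattern's occurrence set in $s$ via standard suffix-array/LCP machinery (the suffix array of $s$ and $p$ is computed by prefix doubling in $O(\log n)$ rounds, and each $P_k$'s occurrence interval is found by binary search using $O(1)$-time LCP comparisons). By the disjointness above, the union of all occurrence positions has size at most $n$, so this subroutine runs in $O(\log n)$ rounds with $\tilde O(n+m)$ total work and memory.

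Next I would build the functional graph $G$ on vertex set $[n]$: every $i\in L_k$ with $k<w$ gets a single out-edge to $\mathrm{succ}_{k+1}(i+\ell_k)$, the smallest element of $L_{k+1}$ that is $\ge i+\ell_k$ (no edge if none exists), while vertices in $L_w$ or outside all $L_k$ are sinks. These edges are produced in $O(1)$ rounds by sorting the queries $(k{+}1,\,i+\ell_k)$ together with the data items $(k{+}1,\,j)$ for $j\in L_{k+1}$ and performing a segmented scan. The correctness hinges on a monotonicity/exchange argument: $\mathrm{succ}_{k+1}$ is monotone, so the greedy chain out of any vertex is coordinate-wise minimal among all legal chains; hence $p$ occurs at $i$ iff $\lambda(i)=1$ and the $G$-path starting at $i$ reaches a vertex of $L_w$. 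Since the level strictly increases by exactly $1$ along every edge, ``reaches $L_w$'' is the same as ``the path has length $w-1$'', so I would compute for each vertex $v$ the terminal vertex $\mathrm{end}(v)$ of its path by pointer jumping ($O(\log n)$ rounds) and report every $i\in L_1$ with $\lambda(\mathrm{end}(i))=w$ (for the decision version of Observation~\ref{ob:star}, answer Yes iff such an $i$ exists). Patterns that begin or end with \chr{*} are handled by the obvious minor modification (a leading \chr{*} removes the anchoring of $L_1$, a trailing \chr{*} the anchoring of $L_w$).

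I expect the dictionary-matching step to be the main obstacle: obtaining all the lists $L_k$ within the near-linear memory and $O(\log n)$-round budget despite the subpatterns having as many as $\Theta(\sqrt m)$ distinct lengths is exactly where the non-prefix hypothesis does the real work, since it is what caps $\sum_k |L_k|$ by $n$ and keeps both $G$ and the whole computation sparse. Everything downstream — edge construction, pointer jumping, and the connectivity query — is routine given the MPC sorting and connectivity primitives already used elsewhere in the paper.
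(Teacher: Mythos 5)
Your proposal is correct and matches the paper's high-level plan exactly: use the non-prefix hypothesis to get a well-defined level function on positions (at most one subpattern can start at any $i$), build a sparse graph with edges from each occurrence to the earliest compatible occurrence of the next subpattern, and decide the problem by an $O(\log n)$-round reachability computation. Your greedy/monotonicity remark (that taking the earliest compatible successor is always safe) is the correct justification and is only implicit in the paper. The places you diverge are in the subroutines. For dictionary matching, the paper does a per-position binary search over the prefix length $j$, comparing $\hash(s[i,i+j])$ against the precomputed hashes of all subpattern prefixes; you instead propose building a suffix array by prefix doubling and locating each $P_k$ by LCP binary search. Both are $O(\log n)$ rounds, but the paper's route reuses only the hashing machinery already in Section~\ref{nowildcard}, while yours imports an extra MPC suffix-array primitive not otherwise used in the paper. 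For the reachability step, you replace the paper's generic graph-connectivity call with pointer jumping on the functional graph; since every out-degree is at most one and the level increases by exactly one along every edge, pointer jumping is arguably the cleaner fit (it directly produces, for each candidate start, the end of its chain), whereas the paper's connectivity call is a slightly heavier hammer for the same $O(\log n)$ round cost. Both choices are sound; net, your version is a valid alternative implementation of the same argument, with the one caveat that the suffix-array subroutine would need its own MPC justification if this were to stand in place of the paper's hash-based binary search.
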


\begin{proof}
	We show that algorithm \textsf{StringMatchingWithStar}(c) solves the problem. 
	
	We first prove the correctness of the algorithm. 
	Since all the subpatterns are not a prefix of each other, 
	for every position $i$ of string $s$, there is at most one subpattern $P_u$ such that $P_u$ is a prefix of $s[i, n]$.
	On the other hand, if $s[i, j]$ is not a prefix of any subpattern, then $h(s[i, j])$ does not equal to any of the hash value obtained in Step 1, 
	otherwise, $h(s[i, j])$ is equal to some hash value obtained in Step 1. 
	Hence, for every $i \in[n]$,  the ``for" loop of Step 2 finds the subpattern $P_u$ such that $P_u$ is a prefix of $s[i, n]$ by binary search if $P_u$ exists. 
	
	If string $s$ matches pattern $p$,  then any set of positions $a_1, a_2, \dots, a_u$ with the following two conditions 
	\begin{enumerate}
		\item $P_i$ is a prefix of $s[a_i, n]$ for every $i \in [w]$.
		\item $a_i + |P_i| \leq a_{i+1}$ for every $i \in [w-1]$.
	\end{enumerate}
	corresponds to a matching between $s$ and $p$.
	Hence, string $s$ matches pattern $p$ if and only if $v_0$ and $v_{m+1}$ are connected in  the constructed graph of Step 15 and 16.
	
	Now we consider the number of MPC rounds required. 
	Using the argument of Section~\ref{nowildcard}, computing hash of all the prefixes of every subpattern or a set of $n$ substrings of $s$  needs constant MPC rounds. 
	Hence Step 1 needs constant rounds, and  the ``for" loop of Step 2 needs $O(\log n)$ rounds. 
	Step 15 naturally needs a single round.
	Step 16 can be done in $O(\log n)$ rounds by sorting all the $(i, f(i))$ pairs according to the $f$ function values and selecting $j$ such that $j \geq i + |P_{f(i)}|$ and $f(j) = f(i) + 1$ for all the pairs $(i, f(i))$.
	The graph connectivity needs $O(\log n)$ rounds. 
\end{proof}

\begin{algorithm}[h!]
	\KwData{
		two array $s$ and $p$.}
	\KwResult{
		Yes or No. }
	
	
	For each subpattern $P_i$ \textbf{in parallel } compute the hash value of every prefix of $p_i$\;

	\For {each position $i$ of string $s$ \text{\textbf{in parallel }} } 
	{
		Set $j = 0$ and $k = n$ initially\;
		\While {$j < k$} {
			Let $\ell = \lceil(j+k)/2\rceil$\;
			Compute the hash $h(s[i, i+ \ell])$\;
			\If { there is a hash obtained in step 2 same to $h(s[i, i+\ell])$}  {
				Set $j \leftarrow \ell$\;
			}
			\Else {
				Set $k \leftarrow \ell - 1$\;
			}
		}
		\If {there is a subpattern $p_u$ same to $s[i, i+j]$} 
		{
			Set $f(i) \leftarrow u$\; 
		}
		\Else {
			Set $f(i) \rightarrow 0$\;
		}
	}
	
	Construct an empty graph \textbf{in parallel} with vertices $v_0, v_1, \dots, v_{n+1}$. Add edge $(v_0, v_s)$ and $(v_t, v_{n+1})$ where $s$ is the smallest integer such that $f(s) = 1$,
	$t$ is the largest integer such that $f(t) = w$\;
	
	For every $i \in [n]$ such that $f(i) > 0$, \textbf{in parallel} 
	add edge $(v_i, v_j)$ to the graph where $j$ is the smallest integer such that $j \geq i + |P_{f(i)}|$ and $f(j) = f(i) + 1$\;
	
	Run graph connectivity algorithm on the graph constructed. return Yes if $v_0$ and $v_{n+1}$ are in the same connected component of the graph, otherwise return No \;

	\caption{\textsf{StringMatchingWithStar}(c)}
	\label{alg:ewdcs}
\end{algorithm}

\section{Acknowledgements}
	MohammadTaghi Hajiaghayi and Hamed Saleh are supported in part by NSF BIGDATA grant IIS-1546108, and NSF SPX grant CCF-1822738.  Saeed Seddighin is partially funded by an unrestricted gift from Google and a Research
	Award from Adobe.

\bibliographystyle{abbrvnat}
\bibliography{stringmatching}

\appendix
\newpage

\section{Hashing}\label{hashing}
To boost comparing the pattern $p$ with the substrings of string $s$, we can compare the fingerprints or hashes instead. A hash of string $s$, denoted by $h(s)$, is a single number such that $h(s) \not= h(s')$ implies $s \not= s'$ always, and $h(s) = h(s')$ implies $s = s'$ with a high probability. Consider a naive hash function $h_r$ such that $h_r(s) = \sum_{i=1}^{|s|}{s_i} \mod r$, where $r$ is an arbitrary number. Note that we can label the characters in $\alf$ with a permutation of size $|\alf|$, thereby, considering a numerical value for each character to simplify the formulas. This hash function does not guarantee a high probability of $s = s'$ in the case of $h_r(s) = h_r(s')$, because this hash function is not locally sensitive, i.e., modifying two consecutive characters in $s$ in a way that the first one reduced by one and the other one increased by one gives us the same hash. 

However, simple hash function $h_r$ has a desirable property that is rolling. A rolling hash function $h$ allows us to achieve $h(s[l+1, r+1])$ from $h(s[l,r])$ in $O(1)$ time, i.e., $h_r(s[l+1,r+1]) = h_r(s[l,r]) - s_l + s_{r+1} \mod r$. Using a rolling hash function, one can imagine a Monte-Carlo randomized algorithm for string matching problem by first computing $h(p)$ and $h(s_{1,m})$ in $O(m)$ time, and then building $h(s[i+1, i+m])$ from $h(s[i, i+m-1])$ for $i$ from $1$ to $n-m$ in $O(n)$ time. Afterwards, we can compare the hash of each substring with $h(p)$ in $O(1)$ time. Therefore, we desire a locally sensitive rolling hash function. Karp and Robbin \cite{KR87} provided such rolling hash functions that guarantee a bounded probability of hash collision. An example of such hash function is 

$$ h_{r,b}(s) = \sum_{i=1}^{|s|}{s_i b^{|s|-i}} \mod r $$

The difference between this hash function and the previous one is local sensitivity. It's not likely anymore to end up with the same hash applying few modifications if we chose $r$ to be a prime number or simply chose such $r$ and $b$ to be relatively prime numbers. Besides, the hash collision probability of an ideal hash function is $r^{-1}$, and no matter how we minimize the correlation of the hashes of similar strings, we cannot achieve a smaller probability. By the way, $h_{r,b}$ with relatively prime $r$ and $b$ achieve a probability in that this correlation is almost negligible. Therefore, it suffices to pick a large enough $r$ to guarantee high probability. More precisely, we want $\bigcup_{i=1}^{n-m+1}{\Pr[P\not=T[i,i+m-1] \wedge h(P)=h(T[i,i+m-1])]}$ be at most $1/n$. Thus,

\begin{align*}
\bigcup_{i=1}^{n-m+1}{\Pr[P\not=T[i,i+m-1] \wedge h(P)=h(T[i,i+m-1])]} & \leq \frac{1}{n} \Rightarrow \\
\sum_{i=1}^{n-m+1}{\Pr[P\not=T[i,i+m-1] \wedge h(P)=h(T[i,i+m-1])]} & \leq \frac{1}{n} \Rightarrow \\
O(n) \frac{1}{r} \leq \frac{1}{n} \Rightarrow r \geq O(n^2) \Rightarrow \log(r) \geq & O(\log(n))
\end{align*}

Therefore, the number of bits required to store hashes should be logarithmic in the size of the text which is a fairly reasonable assumption. Alternatively, we can reduce the probability substantially by comparing hashes based on multiple values of $r$. Hence, we can safely assume that using locally such sensitive hash functions, we can guarantee a high probability of avoiding hash collisions.

\begin{fact}
Given two strings $T \in \alf^n$ and $P \in \alf^m$, using locally sensitive rolling hash functions one can find all occurrences of the string $P$ (pattern) as a substring of string $S$ (text) in $O(n+m)$ time with a high probability.
\label{smhash}
\end{fact}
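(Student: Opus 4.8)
The plan is to instantiate the classical Karp--Rabin fingerprinting scheme~\cite{KR87} with the locally sensitive rolling hash $h_{r,b}$ described above and then to verify that it meets both the time bound and the success probability. First I would discard from $\alf$ every character not appearing in $T$ or $P$, so that $|\alf| = O(n+m)$, fix a prime $r = \Theta(n^{2}m)$, and draw $b$ uniformly from $\{0,1,\dots,r-1\}$; since $r$ is polynomially bounded, every fingerprint fits in $O(\log n)$ bits and arithmetic on fingerprints costs $O(1)$ on a word RAM. I would then compute $h_{r,b}(P)$ and $h_{r,b}(T[1,m])$ directly from their definitions in $O(m)$ time and precompute $b^{m-1}\bmod r$. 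Using the rolling identity $h_{r,b}(T[i{+}1,i{+}m]) \equiv b\bigl(h_{r,b}(T[i,i{+}m{-}1]) - T_i\,b^{m-1}\bigr) + T_{i+m} \pmod r$, I would slide the length-$m$ window across $T$, spending $O(1)$ per position and $O(n)$ in total, and report every $i$ with $h_{r,b}(T[i,i{+}m{-}1]) = h_{r,b}(P)$. This yields the claimed $O(n+m)$ running time.

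For correctness, the algorithm never misses a true occurrence: if $T[i,i{+}m{-}1] = P$ the two fingerprints are literally the same value. The substantive step is bounding false positives. Whenever $T[i,i{+}m{-}1] \neq P$, the polynomial $\Delta_i(X) = \sum_{k=1}^{m}(T_{i+k-1} - P_k)\,X^{m-k}$ has degree at most $m-1$ and, because $r$ exceeds $|\alf|$, at least one of its coefficients is nonzero in the field $\mathbb{Z}_r$; hence $\Delta_i$ has at most $m-1$ roots there, and since $b$ is uniform, $\Pr\bigl[h_{r,b}(T[i,i{+}m{-}1]) = h_{r,b}(P) \bigm| T[i,i{+}m{-}1] \neq P\bigr] = \Pr[\Delta_i(b)\equiv 0] \leq (m-1)/r$. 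A union bound over the at most $n$ candidate positions bounds the overall failure probability by $nm/r \leq 1/n$ for this choice of $r$, so the algorithm is correct with high probability. (If an error-free output is desired, each reported $i$ can be checked by an explicit comparison of $T[i,i{+}m{-}1]$ with $P$, resampling $b$ on any mismatch; this adds only $O(n+m)$ expected time and is unnecessary for the statement as phrased.)

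The hard part will not be the time analysis but tying the probabilistic details together cleanly: one must be explicit that $r$ is prime, so that the root-counting bound over $\mathbb{Z}_r$ applies; that $r$ is polynomially bounded, so fingerprints remain $O(\log n)$ bits and fingerprint arithmetic is genuinely $O(1)$; and that the event controlled by the union bound is $\{\Delta_i(b)\equiv 0\}$ conditioned on a genuine mismatch, not the unconditional collision event. Everything else — the $O(m)$ preprocessing, the $O(1)$ rolling update, and the $n-m+1$ final comparisons — is routine, and the resulting statement is precisely the hashing primitive invoked in Sections~\ref{nowildcard} and~\ref{strreplace}.
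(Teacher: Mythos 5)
Your proof is correct and uses the same Karp--Rabin rolling-fingerprint scheme the paper has in mind, but it handles the probabilistic core more carefully than the paper does. The paper's treatment of Fact~\ref{smhash} is informal: it asserts that $h_{r,b}$ with $r$ and $b$ relatively prime behaves like an ``ideal'' hash with per-position collision probability roughly $1/r$, then union-bounds over $O(n)$ positions to get $r = O(n^2)$. It does not specify which parameter is random or why the $1/r$ bound actually holds. You instead fix a prime modulus $r$ and randomize the base $b$, and you derive a concrete per-position bound by observing that a mismatch makes $\Delta_i(X)=\sum_k(T_{i+k-1}-P_k)X^{m-k}$ a nonzero polynomial of degree at most $m-1$ over $\mathbb{Z}_r$, so it has at most $m-1$ roots and $\Pr[\Delta_i(b)\equiv 0]\le (m-1)/r$. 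That is the honest bound for the random-base variant, and it is why your modulus has to be $\Theta(n^2 m)$ rather than the paper's $\Theta(n^2)$ --- the extra $m$ factor compensates for the degree. The original Karp--Rabin scheme cited by the paper randomizes the \emph{prime} $r$ rather than the base $b$, which gives a different (number-of-prime-factors) collision argument and does recover a per-position bound close to $1/r$; either route is a legitimate instantiation of the Fact, but the paper's writeup leaves this choice implicit while yours makes it explicit and verifiable. In short: same algorithmic skeleton, same $O(n+m)$ time analysis, but your collision bound is proved rather than asserted, at the cost of a mildly larger modulus.
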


In addition to rolling property, we can consider a more general property for $h_{r,b}$ namely partially decomposability. A hash function $h$ is partially decomposable if it is possible to calculate  in $O(1)$ time $h(s[l,r])$ from the hash of the prefixes of $s$, that is $h(s[1,i])$ for all $i \in [1,|s|]$.  In addition, it should be possible to calculate the hash of the concatenation of two strings in $O(1)$ time. For example, $h_{r,b}(s[l,r]) = (h_{r,b}(s[1,r]) - h_{r,b}(s[1,l-1]) b^{r-l+1}) \mod r$, and $h_{r,b}(s + s') = (h_{r,b}(s) b^{|s'|} + h_{r,b}(s')) \mod r$, where $s + s'$ means the concatenation of $s$ and $s'$.

Throughout this paper, we refer to the suitable hash function for an instance of string matching problem by $h$ and ignore the internal complications of the hash functions.

\section{$O(1)$-round algorithm for FFT}
\label{subsection:mpcfft}

To find the $O(1)$ round algorithm in MPC model, we first review how one can find the Discrete Fourier Transform in $O(n\log(n))$ time. The following recurrence helps us to solve the problem using a divide and conquer algorithm. \cite{danielson1942some} Let $\Psi[2k] = \langle a_0, a_2, \ldots, a_{n-2} \rangle$, i.e., an array of the even-indexed elements of $A$, and $\Psi[2k+1] = \langle a_1, a_3, \ldots, a_{n-1} \rangle$, an array of the odd-indexed elements of $A$ likewise. If $\Psi[2k]^* = \langle \psi[2k]^*_0, \psi[2k]^*_1, \ldots, \psi[2k]^*_{n/2-1} \rangle$ and $\Psi[2k+1]^* = \langle \psi[2k+1]^*_0, \psi[2k+1]^*_1, \ldots, \psi[2k+1]^*_{n/2-1} \rangle$ show the DFT of $\Psi[2k]$ and $\Psi[2k+1]$ respectively, then for all $0 \leq j < n / 2$

\begin{align}
	\left.
	\begin{cases}
		a^*_{j} & = \psi[2k]^*_{j} + \unity{n}{j} \psi[2k+1]^*_{j}  \\
		a^*_{j+n/2} & = \psi[2k]^*_{j} - \unity{n}{j} \psi[2k+1]^*_{j} \\
	\end{cases}
	\right.
	\label{recur}
\end{align}

Intuitively, the recurrence decomposes $A$ into two smaller arrays of size $n/2$, $\Psi[2k]$ and $\Psi[2k+1]$, and represents the DFT of $A$ based on $\Psi[2k]^*$ and $\Psi[2k+1]^*$. Accordingly, we can solve two smaller DFT problems, each with size $n/2$, and then merge them in $O(n)$ time to find $A^*$. Therefore, we can find $A^*$ in $O(n\log(n))$ time since the depth of the recurrence is $O(\log(n))$. We cannot find a constant round MPC algorithm exclusively using this recurrence. To extend the recursion, we can define $\Psi[pk+q]$ as the following,

$$ \Psi[pk+q] = \langle a_{q}, a_{p+q}, \ldots,  a_{(l-1)p + q} \rangle \hspace{0.5cm} \forall \,\, 0 \leq q < p \leq n$$

Where $l = \lfloor \frac{n-1-q}{p}\rfloor+1$ is the number of the indices whose reminder is $q$ in division by $p$. $\Psi[pk+q]$ contains all the elements of $A$ whose index is $q$ modulo $p$. Note that the notation of $\Psi[pk+q]$ assumes a universal quantification on all $0 \leq k \leq n/p$. We also show the DFT of $\Psi[pk+q]$ by 

$$ \Psi[pk+q]^* = \langle \psi[pk+q]^*_0, \psi[pk+q]^*_1, \ldots,  \psi[pk+q]^*_{l-1}  \rangle $$

We can observe that for all $0 \leq b \leq \log_2(n)$ and $0 \leq q < 2^{b+1}$, assuming $p = 2^b$, the following counterpart of Recurrence \ref{recur} holds for all $0 \leq j < l$, where $l$ is the length of $\Psi[2^bk+q]$ which equals $\log(n)-b$:

\begin{align}
	\left.
	\begin{cases}
		\psi[2^bk+q]^*_{j} & = \psi[2^{b+1}k+q]^*_{j} + \unity{l}{j}\psi[2^{b+1}k+2^b+q]^*_{j}  \\
		\psi[2^bk+q]^*_{j+l/2} & = \psi[2^{b+1}k+q]^*_{j} - \unity{l}{j}\psi[2^{b+1}k+2^b+q]^*_{j}  \\
	\end{cases}
	\right.
	\label{nsrecurgen}
\end{align}

Since $\Psi[2^{b+1}k+q]$ and $\Psi[2^{b+1}k+2^b+q]$ decompose $\Psi[2^bk+q]$ into the even-indexed and odd-indexed elements, Recurrence \ref{nsrecurgen} is implied by plugging in $\Psi[2^bk+q]$ as $A$ in Recurrence \ref{recur}. The following Lemmas (\ref{depend} and \ref{bitrev}) point out two properties regarding FFT in the MPC model. These properties immediately give us an $O(\log(n))$-round algorithm for the FFT problem.

\begin{lemma}
	For all $\log(\machines) \leq b \leq \log(n)$ and $0 \leq q < 2^{b}$, the value of $\Psi[2^{b}k+q]^*$ could be computed in a single machine with memory of $\memory$.
	\label{depend}
\end{lemma}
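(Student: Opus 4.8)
The plan is to observe that once $b \ge \log\machines$, the subarray $\Psi[2^bk+q]$ is short enough to reside inside one machine, so computing its DFT becomes a purely local task solvable by any sequential FFT routine. First I would pin down the length: by definition $\Psi[2^bk+q]$ consists of exactly those entries $a_j$ of $A$ with $j \equiv q \pmod{2^b}$, so its length is $l = \lfloor (n-1-q)/2^b\rfloor + 1 = \Theta(n/2^b)$. Since $b \ge \log\machines$ gives $2^b \ge \machines = \tildorder(n^x)$, we get $l \le n/2^b \le n/\machines = \tildorder(n^{1-x}) = \memory$. Hence both the input subarray $\Psi[2^bk+q]$ and its transform $\Psi[2^bk+q]^*$, which has the same length, fit within the $\memory$-word memory of a single machine.

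Next I would address how the subarray actually gets onto one machine: each input entry $a_j$ is routed to the machine handling residue class $j \bmod 2^b$. Every machine starts with $\memory$ entries, so it sends at most $\memory$ words; and since the $2^b$ residue classes partition all $n$ input entries, a machine assigned $2^b/\machines$ classes of $\le n/2^b$ entries each receives at most $n/\machines = \memory$ words. Thus the gathering is one legal MPC round. Once $\Psi[2^bk+q]$ is local, the machine runs the standard radix-$2$ Cooley–Tukey recursion (Recurrence~\ref{recur}, equivalently Equation~\ref{dftO}) on an array of length $l$, producing $\Psi[2^bk+q]^*$ in $O(l\log l)$ local time with no further communication. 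Note we do not need Recurrence~\ref{nsrecurgen} here; the point is purely that a subproblem of size $\le \memory$ can be finished in place.

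The only step needing genuine care is the memory accounting of the routing: at level $b$ there can be as many as $2^b$ residue classes, possibly far more than $\machines$, so one must verify no machine exceeds $\memory$ memory or $\memory$ communication when all these small subarrays are distributed simultaneously — which is exactly what the count $n/\machines = \memory$ guarantees, since the classes form a partition of the $n$ inputs. Everything else follows immediately from the bound $l \le \memory$, so I expect the write-up to be short.
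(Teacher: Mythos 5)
Your proposal is correct and takes essentially the same approach as the paper's (extremely terse) proof: both rest on the observation that $b \ge \log\machines$ forces the length of $\Psi[2^bk+q]$ to be at most $n/\machines = \memory$, after which a single machine runs a standard local FFT. You simply spell out the length bound $l = \lfloor (n-1-q)/2^b\rfloor + 1 \le n/\machines$ and add a routing/communication-budget check that the paper leaves implicit (and delegates in part to the bit-reversal Lemma~\ref{bitrev}).
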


\begin{proof}
	If all of the entries of $\Psi[2^{b}k+q]$ exist in the memory of a single machine, which is viable since $b \geq \log(\machines)$, then we can compute $\Psi[2^{b}k+q]^*$ as following:
	
	$$\Psi[2^{b}k+q]^* = \textsf{fft}(\Psi[2^{b}k+q])$$
\end{proof}

In many implementations of Fast Fourier Transform, bit-reversal technique is used to facilitate the algorithm from various perspectives, to achieve an in-place FFT algorithm for example. Bit-reversal technique reorders the initial array based on the reverse binary representation of the indices; Considering $(10100)_2$ as the sort key for $a_5$, $5 = (00101)_2$, when $n = 32$ for example. In other words, bit reversal technique permutes the elements of $A$ by a permutation $\pi$ such that $\pi(i) = \textsf{rev}(i)$, where $\textsf{rev}(i)$ is the reverse binary presentation of $i$.  Applying bit reversal technique guarantees the locality of data, especially in the first $\log(\memory)$ stages. 

\begin{lemma}
	The bit-reversal operation makes a permutation of input entries which if we it split into $2^{b}$ continuous chunks of equal size, then for all $0 \leq b \leq \log(n)$, $\Psi[2^b k+\textsf{rev}(q)]$ would the $q$-th chunk.
	\label{bitrev}
\end{lemma}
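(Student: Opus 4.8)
The plan is to push everything down to the binary digits of the indices. Write $n = 2^K$ with $K = \log n$, and identify an index $i \in \{0,\dots,n-1\}$ with its length-$K$ binary string $i = (i_{K-1}i_{K-2}\cdots i_0)_2$ (most significant bit first); then $\textsf{rev}(i) = (i_0 i_1 \cdots i_{K-1})_2$, and since $\textsf{rev}$ is an involution, after the bit-reversal permutation ($\pi(i) = \textsf{rev}(i)$) position $j$ of the array holds the original entry $a_{\textsf{rev}(j)}$. If we split this permuted array into $2^b$ equal continuous chunks of length $2^{K-b}$, the $q$-th chunk (for $0 \le q < 2^b$) is exactly the set of positions whose top $b$ bits spell out $q$: every such position is uniquely $i = q\cdot 2^{K-b} + r$ with $0 \le r < 2^{K-b}$, i.e.\ $i = (q_{b-1}\cdots q_0\,r_{K-b-1}\cdots r_0)_2$ with $(q_{b-1}\cdots q_0)_2 = q$ and $(r_{K-b-1}\cdots r_0)_2 = r$.

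First I would compute $\textsf{rev}(i)$ for such an $i$ by reversing the digit string: $\textsf{rev}(i) = (r_0\cdots r_{K-b-1}\,q_0\cdots q_{b-1})_2$, whose low $b$ bits form the number $\textsf{rev}_b(q)$ (the $b$-bit reversal of $q$) and whose high $K-b$ bits form $\textsf{rev}_{K-b}(r)$. Hence $\textsf{rev}(i) = 2^b\cdot\textsf{rev}_{K-b}(r) + \textsf{rev}_b(q)$, so every entry in the $q$-th chunk has an index $\equiv \textsf{rev}_b(q) \pmod{2^b}$; conversely, since $r \mapsto \textsf{rev}_{K-b}(r)$ is a bijection of $\{0,\dots,2^{K-b}-1\}$, as $r$ ranges over the chunk the indices $\textsf{rev}(i)$ cover exactly all indices congruent to $\textsf{rev}_b(q)$ modulo $2^b$. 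By the definition of $\Psi[\cdot]$ those are precisely the entries of $\Psi[2^b k + \textsf{rev}(q)]$, which proves the claim at the level of the contents of the $q$-th chunk. Tracking order as well: the $r$-th element of the $q$-th chunk is $a_{2^b\textsf{rev}_{K-b}(r) + \textsf{rev}_b(q)}$, i.e.\ the chunk is exactly the length-$2^{K-b}$ bit-reversal of $\Psi[2^b k + \textsf{rev}(q)]$ — which is precisely the ordering an in-place FFT consumes in the next stage, so this is the form actually used in Definition~\ref{phi} and Algorithm~\ref{alg:mpc_fft}.

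As an alternative route I could run an induction on $b$: for $b = 0$ the statement is immediate, and the step from $b$ to $b+1$ observes that halving the $q$-th level-$b$ chunk conditions on one further (next-most-significant) \emph{position} bit, which under $\textsf{rev}$ is the least-significant \emph{index} bit — exactly the even/odd split by which $\Psi[2^{b+1}k+q']$ and $\Psi[2^{b+1}k+2^b+q']$ decompose $\Psi[2^b k+q']$, as already noted after Recurrence~\ref{nsrecurgen}. Either way, the main obstacle is purely bookkeeping: keeping straight the two bit-reversals of different lengths ($\textsf{rev}_b$ acting on the chunk label $q$ versus $\textsf{rev}_{K-b}$ acting inside the chunk), and remembering that the chunk equals $\Psi[2^b k + \textsf{rev}(q)]$ as an ordered list only after that block's own internal bit-reversal (as sets they coincide unconditionally).
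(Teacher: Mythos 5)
Your argument is correct and is essentially the same as the paper's: both identify the $q$-th chunk as the positions whose top $b$ bits encode $q$, observe that bit-reversal turns those into indices whose low $b$ bits encode $\textsf{rev}_b(q)$, and conclude that those are exactly the members of $\Psi[2^b k + \textsf{rev}(q)]$. You are, however, somewhat more careful than the paper: you spell out the two-length bookkeeping ($\textsf{rev}_b$ on the chunk label versus $\textsf{rev}_{K-b}$ inside the chunk) and you explicitly flag that the chunk agrees with $\Psi[2^b k + \textsf{rev}(q)]$ only as a \emph{set} — its internal ordering is the $(K-b)$-bit reversal of that subarray — which the paper glosses over but which is all that Lemma~\ref{depend} actually needs, since it only requires the entries of $\Psi[2^bk+q]$ to reside on one machine.
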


\begin{proof}
	Notice that the bit-reversal permutation is actually sorting the entries based on the reverse binary representation of the index of each entry. Therefore, all indices with the same low $b$ bits form a continuous interval of size $2^{\log(n) - b}$ in the bit-reversal permutation. Thus, $\Psi[2^b k + \textsf{rev}(q)]$ which all of its members have the same low $b$ bits, $\textsf{rev}(q)$. Moreover, it is the $q$-th chunk in the bit-reversal permutation, because $\textsf{rev}(q)$ ranks $q$-th in the bit-reversal ordering of all non-negative integers less than $2^b$.
\end{proof}

\begin{definition}
	For all $0 \leq q < \machines$, Let $\Phi_q = \Psi[\machines k + \textsf{rev}(q)]$. ($\Phi_q$ shows the input of the $q$-th machine after performing the bit-reversal permutation, according to Lemma \ref{bitrev}) We also show the DFT of $\Phi_q$ by $\Phi^+_q$, instead of $\Phi^*_q$. The $j$-th element of $\Phi_q$ ($\Phi^+_q$) is denoted by $\phi_{q,j}$. ($\phi^+_{q,j}$)
	\label{phi}
\end{definition}

Utilizing Lemma \ref{depend} and Lemma \ref{bitrev}, we can achieve a $O(\log(n))$-round MPC algorithm which merges all $\Phi_q^+$s in $\log(\machines)$ steps. At the first step, we apply bit-reversal permutation in order to gather each $\Phi_q$ in a single machine. Applying bit-reversal permutation does not have any special communication overhead, as we are just transferring the elements. Then we compute $\Phi^+_q$ for all $q \in \mathbb{Z}_m$. Afterwards, in the $l$-th step, for $0 \leq l < \log(m)$, we can merge $2^{\log(m)-l}$ blocks of size $2^{\log(\memory)+l}$ using equation \ref{nsrecurgen} in pairs.

It seems that $\log(n)$ MPC rounds is required because we need to merge $\Phi^+_q$s which are distributed in various machines, and according to Equality \ref{dftO}, each $a^*_k$ depends on  every $\Phi^+_q$ for $q \in \mathbb{Z}_m$. However, we can exploit the nice properties of the graph of dependencies of $a^*_i$s to $\Phi^+_q$s, which is also called the \textit{Butterfly Graph}, to decompose these dependencies efficiently. An example of this dependencies graph is demonstrated in Figure \ref{fftgraph_mpc}.  We can concentrate these dependencies in single machines by gathering $\phi^+_{q,j}$s with the same $j$. Figure \ref{fftgraph_mpc} demonstrates why it suffices to have $\phi^+_{q,j}$s with the same $j$ stored in a single machine. In this figure, $\memory$ and $\machines$ are equal to $4$, and the entries that should be stored in the same machine are colored with the same color. At the first step, the entries are distributed in bit-reversal permutation, and continuous chunks are stored in each machine. However, in the next step, we can see for example $a^*_5$, which is colored green, depends only on the green entries, which all have the same $j = 1$.

\begin{definition}
	Let $\widetilde{\phi}^+_{q,j} = \unity{n}{qj} \phi^+_{q, j}$. These coefficients are called twiddle factors, which allows us to express $a^*_k$ based on the FFT of $\widetilde{\phi}^+_{q,j}$s. Let $\Phi^*_j$ be the DFT of vector $\langle 
	\widetilde{\phi}^+_{0, j},
	\widetilde{\phi}^+_{1, j}, \ldots,
	\widetilde{\phi}^+_{\machines-1, j}
	\rangle$. We will show in Lemma \ref{butterfly} that  $a^*_{q\memory + j}$ equals $\phi^*_{j, q}$, which is the $j$-th element of $\Phi^*_j$.
\end{definition}

\begin{figure}[!h]
	\begin{center}
		\includegraphics[scale=0.1]{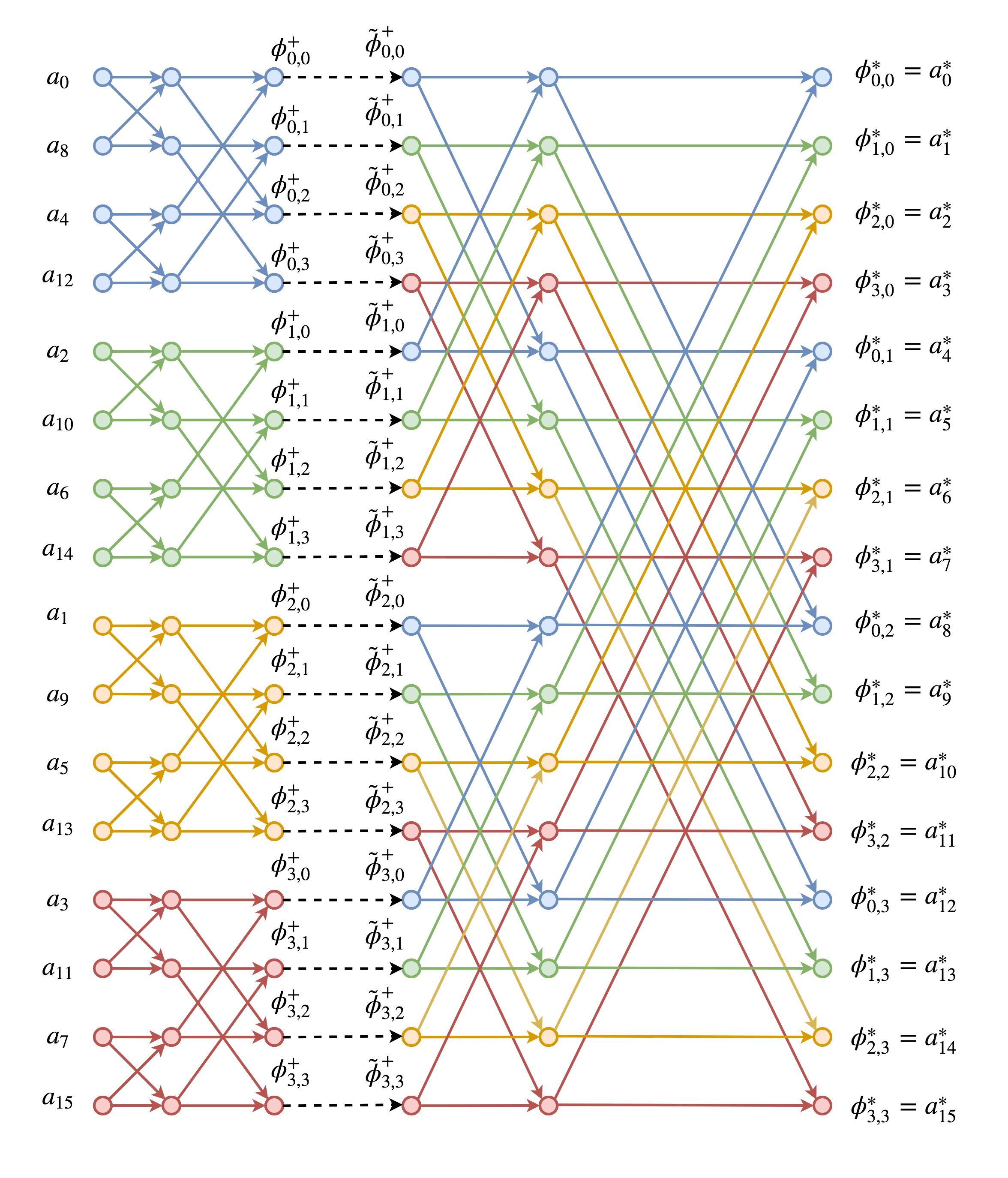}
	\end{center}
	\caption{The dependency graph of the FFT recurrence in MPC model. Different colors represent different machines, in a setting with $4$ machines with memory of $4$ each. It could be observed that the dependency graph in each machine at the first stage is isomorphic to the one of the second stage.}
	\label{fftgraph_mpc}
\end{figure}

\begin{lemma}
	For all $0 \leq j \leq \memory - 1$ and $0 \leq q \leq \machines -1 $, we have $a^*_{q\memory+j} = \phi^*_{j,q}$.
	\label{butterfly}
\end{lemma}

\begin{proof}
	We know that $a^*_k$ equals a weighted sum of all $a_j$ where weights are the $n$-th roots of unity to the power of $k$. i.e.,
	
	\begin{align*}
		a^*_k = \sum_{j=0}^{n-1}{\unity{n}{jk}a_j}
	\end{align*}
	
	We break down the summation into $\machines$ smaller summations of size $\memory$ such that each smaller summation represent one $\Phi_i$ for some $0 \leq i \leq \machines - 1$. Moreover, we show $k$ by $q\memory+j$, where $0 \leq j \leq \memory - 1$. Hence, we can show the following.
	
	\begin{align}
		a^*_{q\memory+j} & = \sum_{z = 0}^{\machines - 1}{
			\sum_{l=0}^{\memory - 1}{
				\unity{n}{(l\machines+z)(q\memory+j)} a_{l\machines + z}
			}
		}\\
		& = \sum_{z = 0}^{\machines - 1}{
			\sum_{l=0}^{\memory - 1}{
				\unity{n}{l\machines q\memory} \unity{n}{l\machines j} \unity{n}{zq \memory} \unity{n}{zj} a_{l\machines + z}
			}
		}\\
		& = \sum_{z = 0}^{\machines - 1}{
			\sum_{l=0}^{\memory - 1}{
				\unity{\memory}{l j} \unity{\machines}{zq} \unity{n}{zj} a_{l\machines + z}
			}
		}\label{simpWs}\\
		& = \sum_{z = 0}^{\machines - 1}{
			\unity{\machines}{zq} \unity{n}{zj} \sum_{l=0}^{\memory - 1}{
				\unity{\memory}{l j} a_{l\machines + z}
			}
		}\\
		& = \sum_{z = 0}^{\machines - 1}{
			\unity{\machines}{zq} \unity{n}{zj} \phi^+_{z,j}
		}\\
		& = \sum_{z = 0}^{\machines - 1}{
			\unity{\machines}{zq} \widetilde{\phi}^+_{z,j}
		} = \phi^*_{j,q}
	\end{align}
	
	Equality \ref{simpWs} is implied from the fact that $\unity{kn}{k} = \unity{n}{}$, and therefore, $\unity{n}{l\machines p \memory} = \unity{n}{lpn} = \unity{1}{lp} = 1$, $\unity{n}{l\machines j} = \unity{\memory}{l j}$, and $\unity{n}{zp\memory} = \unity{\machines}{zp}$.
\end{proof}

	Utilizing Lemma \ref{butterfly}, we can provide an algorithm which solves the FFT problem in $O(1)$ MPC rounds, because we already know how to compute $\phi^*_{j,q}$s. Algorithm \ref{alg:mpc_fft} shows the algorithm in details. It can be observed that our algorithm is analogous to Cooley-Tukey algorithm with radix $n^x$.
	
	\begin{algorithm}[h!]
		\KwData{
			an array $a$.}
		\KwResult{
			array $a^*$ containing the DFT of $a$.}
		
		permute $a$ such that members of $\Phi_q$ be together in a single machine for all $q \in \mathbb{Z}_\machines$.
		
		Run in parallel: \For{$q \in \mathbb{Z}_\machines$}{
			$\Phi^+_q \gets $ \textsf{fft}($\Phi_q$)\;
			$\widetilde{\phi}^+_{q,j} \gets \unity{n}{qj} \phi^+_{q,j} \beforall j \in \mathbb{Z}_\memory$\;
		}
		
		distribute $\widetilde{\phi}^+_{q,j}$ into different machines such that entries with same $j$ be in a same machine.
		
		Run in parallel: \For{$j \in \mathbb{Z}_\memory$}{
			$\Phi^*_j \gets $ \textsf{fft}($\langle \widetilde{\phi}^+_{0,j},
			\widetilde{\phi}^+_{1,j}, \ldots, \widetilde{\phi}^+_{\machines-1,j}  \rangle$)\;
		}
		
		permute all $\phi^*_{j,q}$ ($= a^*_{q\memory+j}$) in a way that each of them retain its correct position.
		\caption{\textsf{mpc-fft}($b$)}
		\label{alg:mpc_fft}
	\end{algorithm}
	
	\begin{proof}[Theorem \ref{mpcfft}]
		At the first step of the algorithm, we apply the bit-reversal permutation in the input, thereby grouping the members of $\Phi_q$ in each of $\machines$ machines. Applying a permutation imposes a communication overhead of $O(\memory)$ for each machine. Now, the $q$-th machine can compute $\Phi^+_q$ and $\widetilde{\Phi}^+_j$ standalone in $\widetilde{O}(n)$ time as following:
		\begin{itemize}
			\item $\Phi^+_q = \textsf{fft}(\Phi_q)$
			\item $\widetilde{\Phi}^+_{q,j} = \unity{n}{qj}\Phi^+_{q,j}$
		\end{itemize}
		
		In the next round, we distribute $\widetilde{\Phi}^+_{q,j}$s with the same $j$ in the same machines. Note that $\widetilde{\Phi}^+_{q,j}$s with the same $j$ fit into a single machine since $x \leq 1/2$. We even might need to fit multiple groups of $\widetilde{\Phi}^+_{q,j}$s into a single machine, which causes no problem. By the way, we can compute $\Phi^*_j$s in this stage in $\widetilde{O}(n)$ time. Afterwards it suffices restore appropriate $a^*_k$s to their original position, where for $k = qS+j$, $a^*_k = \phi^*_{j, q}$.
	\end{proof}

\begin{observation}
	\label{obs:ffttightrounds}
	Although Algorithm \ref{alg:mpc_fft} requires $3$ rounds of communication for computing DFT in the MPC model, the convolution of two arrays can be computed in $4$ communication rounds.
\end{observation}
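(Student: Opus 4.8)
The plan is to count communication rounds carefully in Algorithm \ref{alg:mpc_fft} and then argue that wrapping it to compute a convolution only adds one more round. First I would observe that the algorithm as written has exactly three points at which data crosses machine boundaries: (i) the initial bit-reversal permutation that gathers each $\Phi_q$ onto machine $q$; (ii) the redistribution of the twiddled values $\widetilde{\phi}^+_{q,j}$ so that entries sharing the same index $j$ land on a common machine; and (iii) the final permutation that sends each $\phi^*_{j,q}$ back to the position of $a^*_{q\memory+j}$. Every other line of Algorithm \ref{alg:mpc_fft} is a purely local \textsf{fft} call or a local multiplication by a root of unity, requiring no communication. Since by Lemma \ref{depend} (with $b=\log\machines$, using $x\le 1/2$) each $\Phi_q$ fits on one machine, step (i) is a single shuffle; likewise each group of $\machines$ values $\widetilde{\phi}^+_{\cdot,j}$ fits on one machine, so step (ii) is a single shuffle; and step (iii) is a single shuffle as well. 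Hence DFT costs $3$ communication rounds.

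Next I would handle convolution. To compute $A\circledast B$ via FFT one performs: forward DFT of (zero-padded) $A$, forward DFT of $B$, a pointwise product of the two transforms, and an inverse DFT of the product. The key trick is that the two forward transforms are independent and can be run in parallel on disjoint machine groups, so together they still cost only $3$ rounds, not $6$. The subtle point — and the one I expect to be the only real obstacle — is the interface between the end of the forward transforms and the start of the inverse transform: the forward DFT of Algorithm \ref{alg:mpc_fft} leaves $a^*_{q\memory+j}$ physically placed after step (iii), while the inverse transform wants its input laid out as the $\Phi_q$ blocks of step (i). Naively this forces an extra permutation round, giving $3+1+3$. I would avoid the blow-up by fusing the last permutation of the forward transforms, the pointwise multiplication, and the first permutation of the inverse transform into a single communication round: the pointwise product $a^*_k b^*_k$ is local once the two transformed arrays are co-located, and co-locating them in exactly the layout the inverse FFT needs is itself just one shuffle. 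Concretely, after the two forward transforms have produced their twiddled-and-locally-transformed intermediate values, we route them directly into the machine layout required to start the inverse pass, multiplying pointwise along the way; this replaces what would have been the forward transforms' step (iii) and the inverse transform's step (i) by one combined shuffle.

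Counting rounds with this fusion: the forward transforms contribute their steps (i) and (ii) — that is $2$ rounds — then the fused multiply-and-reroute is $1$ round, and the inverse transform contributes its remaining steps (ii) and (iii) — another $2$ rounds — but steps (ii) of the inverse can be merged with the fused round as well, so the honest tally is: $2$ rounds for the forward phase up through twiddling, $1$ round for the combined reroute/pointwise-product, and $1$ more round for the inverse transform's internal redistribution, and a final round to restore output positions. I would then check this comes to $4$. The main obstacle, as noted, is verifying that the intermediate data layouts genuinely match up so that no hidden extra shuffle is needed; this is a bookkeeping argument about which $(q,j)$ indices sit on which machine at each stage, using Lemma \ref{butterfly} and Lemma \ref{bitrev} to track the permutations, together with $x\le 1/2$ to guarantee every group fits on one machine. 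Once that layout-matching is confirmed, the $4$-round bound for convolution follows immediately, establishing the observation.
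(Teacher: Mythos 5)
Your proposal rests on the same key idea as the paper---that the boundary permutation rounds between the forward and inverse transforms can be eliminated because the pointwise product is order-independent---and you correctly note that the two forward transforms of $A$ and $B$ run in parallel for free. The problem is that your final tally does not actually reach $4$. As you list it, the count is ``$2$ rounds for the forward phase up through twiddling, $1$ round for the combined reroute/pointwise-product, $1$ more round for the inverse transform's internal redistribution, and a final round to restore output positions,'' which sums to $5$; the sentence ``I would then check this comes to $4$'' is asserted, not derived.

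The discrepancy is not cosmetic. The paper's mechanism is to \emph{skip} both the forward FFT's final restore permutation (step iii) and the inverse FFT's initial bit-reversal shuffle (step i) outright---contributing zero rounds---because after the forward step (ii) both $A^*$ and $B^*$ already sit co-located in the $\phi^*_{j,q}$ layout, the pointwise product is purely local there, and the inverse pass is launched directly from that layout. That gives $2 + 0 + 2 = 4$. Your version instead replaces those two shuffles by a single fused ``multiply-and-reroute'' shuffle, which is one communication round, yielding $2 + 1 + 2 = 5$. You then suggest that the inverse transform's step (ii) can also be absorbed into the fused round, which would recover $4$, but this is precisely the layout-matching claim you flag as ``the main obstacle'' and never verify---and your itemized tally keeps that round separate anyway. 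To close the gap you need to commit to one of two positions: either the pointwise product is free because the two transformed arrays are already co-located and the inverse pass can start from that layout without a shuffle (the paper's route), or one of the inverse pass's own shuffles genuinely subsumes the reroute---and then the itemized count must reflect that merger rather than listing it as an extra round.
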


\begin{proof}
	Since we need to apply FFT twice for computing the convolution, we can find the convolution in $6$ communication rounds. However, since the first round applies a permutation on the elements, and the last round applies the inverse permutation, we can ignore the last round of computing the FFT, along with the first round of computing the inverse FFT. The point-wise product operation which is needed two be done between two FFTs allows us to do so, as it is independent of the order of the arrays.
\end{proof}

\end{document}